\newcommand{\ig}[2][scale=1.0]{\includegraphics[#1]{#2}}
\newcommand{\etal}{et al.~}
\newcommand{\ie}{i.e.~}
\newcommand{\binom}[2]{{#1 \choose #2}}
\newcommand{\marge}[1]{}
\newcommand{\adj}{\textrm{adj}}
\newcommand{\depth}{{\rm depth}}
\DeclareSymbolFont{AMSb}{U}{msb}{m}{n}
\DeclareSymbolFontAlphabet{\mathbb}{AMSb}
\newtheorem{theorem}{Theorem}[section]
\newtheorem{lemma}[theorem]{Lemma}
\newtheorem{corollary}[theorem]{Corollary}
\newtheorem{definition}[theorem]{Definition}
\newtheorem{example2}[theorem]{Example}
\newenvironment{proof}{\begin{genproof}}{\end{genproof}}
\newenvironment{genproof}[1][]{\begin{trivlist}\item \textbf{Proof#1:} }{\nolinebreak\qquad\nolinebreak\framebox(5,5)[lb]{}\nolinebreak\end{trivlist}}
\newenvironment{genproofnoqed}[1][]{\begin{trivlist}\item \textbf{Proof#1:} }{\nolinebreak\end{trivlist}}
\newcommand{\knip}[1]{}
\newcommand{\burn}{{\sc{Saving All But $k$ Vertices}}\xspace}
\newcommand{\burntwo}{{\sc{Saving All But $k$ Vertices II}}\xspace}
 \newcommand{\maxprotect}{{\sc{Maximum $k$-Vertex Protection}}\xspace}
 \newcommand{\maxprotectdecision}{{\sc{$k$-Vertex Protection}}\xspace}
 \newcommand{\save}{{\sc{Saving $k$ Vertices}}\xspace}
 \newcommand{\firefighter}{{\sc{Firefighter}}\xspace}
\newlength{\atextwidth}
\newcommand{\problemdef}[4]{
  \vspace{1mm}
\noindent\fbox{
  \begin{minipage}{\atextwidth}
  \begin{tabular*}{\textwidth}{@{\extracolsep{\fill}}lr} #1 & {\bf{Parameter:}} #3 \\ \end{tabular*}
  {\bf{Input:}} #2  \\
  {\bf{Question:}} #4
  \end{minipage}
  }
  \vspace{1mm}
}
\begin{document}

\title{Parameterized Complexity of Firefighting Revisited}
\author{%
Marek Cygan\footnote{Institute of Informatics, University of Warsaw, Poland, \texttt{cygan@mimuw.edu.pl}}
\and Fedor V. Fomin\footnote{Department of Informatics, University of Bergen, Norway, \texttt{fedor.fomin|E.J.van.Leeuwen@ii.uib.no}}
\and Erik Jan van Leeuwen\footnotemark[2]}
\date{}
\maketitle
\vspace{-0.8cm}

\begin{abstract}
\noindent The \firefighter problem is to place firefighters on the vertices of a graph to prevent a fire with known starting point from lighting up the entire graph. In each time step, a firefighter may be permanently placed on an unburned vertex and the fire spreads to its neighborhood in the graph in so far no firefighters are protecting those vertices. The goal is to let as few vertices burn as possible. This problem is known to be NP-complete, even when restricted to bipartite graphs or to trees of maximum degree three. Initial study showed the \firefighter problem to be fixed-parameter tractable on trees in various parameterizations. We complete these results by showing that the problem is in FPT on general graphs when parameterized by the number of burned vertices, but has no polynomial kernel on trees, resolving an open problem. Conversely, we show that the problem is W[1]-hard when parameterized by the number of unburned vertices, even on bipartite graphs. For both parameterizations, we additionally give refined   algorithms on trees, improving on the running times of the known algorithms.
\end{abstract}

\section{Introduction} \label{sec:introduction}
The \firefighter problem  concerns a deterministic model of fire spreading through a graph via its edges. The problem has recently received considerable attention 
\cite{FinbowHLS00,MacGillivrayW03}. 
In the model, we are given a graph $G$ with a vertex $s\in V(G)$. At time $t=0$, the fire breaks out at $s$ and vertex $s$ starts burning. At each step $t\geq 1$, first the firefighter protects one vertex not yet on fire---this vertex remains permanently protected---and the fire then spreads     
 from  burning vertices to all unprotected neighbors of these vertices. The process stops when the fire cannot spread anymore. The goal is to find a strategy for the firefighter that minimizes the amount of burned vertices, or, equivalently, maximizes the number of saved, i.e.\ not burned,   vertices. 

It is known that the \firefighter problem is NP-hard, even when restricted to bipartite graphs~\cite{MacGillivrayW03} or trees of maximum degree three~\cite{FinbowKMR07}. However, it is polynomial-time solvable on such trees if the root has degree two~\cite{MacGillivrayW03}. We refer to the survey~\cite{FinbowMac09} for an overview of further combinatorial results on the problem. 
%
%
%
The study of the problem from the perspective of parameterized complexity was initiated by Cai, Verbin, and Yang~\cite{cai-verbin-yang}.  
They considered the following parameterized versions of the problem and obtained a number of parameterized algorithms on trees.

The first parameterization  considered by Cai et al.\ in~\cite{cai-verbin-yang}  is by the number of saved  vertices. 

\problemdef{\save}{An undirected graph $G$, a vertex $s$, and an integer $k$.}{$k$}{Is there a strategy to save at least $k$ vertices when a fire breaks out at $s$?}

\noindent Cai et al.\ proved that \save  on trees has  a kernel with $O(k^2)$ vertices.  They also 
gave a  randomized algorithm solving \save on trees in time  $O(4^{k}+n)$, which can be derandomized to a $O(n + 2^{O(k)})$-time algorithm.


The second  parameterization  is by the number of burned  vertices. 

\problemdef{\burn}{An undirected $n$-vertex graph $G$, a vertex $s$, and an integer $k$.}{$k$}{Is there a strategy to save at least $n-k$ vertices when a fire breaks out at $s$?}

\noindent For \burn on trees,
Cai et al.\ gave a randomized algorithm of running time $O(4^{k}n)$, which can be derandomized to a $O(2^{O(k)}n \log n)$-time algorithm. They left as an open problem whether  \burn  has a polynomial kernel on trees. 

The last parameterization is by the number of protected vertices, i.e. the number of vertices occupied by firefighters. 

\problemdef{\maxprotect}{An undirected graph $G$, a vertex $s$, and an integer $k$.}{$k$}{What is a strategy that saves the maximum number of vertices by protecting  $k$ vertices when a fire breaks out at $s$?}

\noindent For  \maxprotect 
on trees, Cai et al.\ gave a randomized algorithm of running time  $O(k^{O(k)}n)$, which can be derandomized to a $O(k^{O(k)}n \log n)$-time algorithm. 
They left open whether the problem has a polynomial kernel on trees, and asked whether there is an algorithm solving the problem on trees in time $2^{o(k\log{k})}n^{O(1)}$.  

We will sometimes consider the decision variant of \maxprotect.

\problemdef{\maxprotectdecision}{An undirected graph $G$, a vertex $s$, an integer $k$, and an integer $K$.}{$k$}{Is there a strategy that saves at least $K$ vertices by protecting $k$ vertices when a fire breaks out at $s$?}

\noindent The unparameterized version of this problem is obviously NP-hard on trees of maximum degree three from the hardness of the \firefighter problem.

\medskip\noindent\textbf{Our results}\ \ We resolve several open questions of 
Cai, Verbin, and Yang~\cite{cai-verbin-yang}.  We also refine and extend some of the   results   of \cite{cai-verbin-yang}.
\begin{itemize} 

\item
In Section~\ref{sec:saving}, we give a deterministic algorithm solving \save on trees in time  $O(2^{k}k^3+n)$, improving the running time  $O(4^{k}+n)$ of the randomized algorithm from  \cite{cai-verbin-yang}.
We also observe that on general graphs the problem is W[1]-hard, which was independently observed by Cai (private communication), but is in FPT when parameterized by $k$ and the treewidth of a graph.  Based on that we derive  that \save is in FPT on graphs of bounded local treewidth, including planar graphs, graphs of bounded genus, apex-minor-free graphs, and graphs of bounded maximum vertex degree.

\item
In Section~\ref{sec:burning}, we   provide  deterministic algorithms solving \burn  in time $O(2^{k}n)$ on trees, and in time  $O(3^{k}n)$ on general graphs. The algorithm on trees improves the $O(4^{k}n)$ running time of  the randomized algorithm  from  \cite{cai-verbin-yang}.  
 We also answer the open question of Cai et al.\ by showing that 
 \burn  has no polynomial kernel on trees of maximum vertex degree four. 

\item
For   \maxprotect, we     answer both    open questions of Cai et al.:  We give a deterministic algorithm solving  \maxprotect on trees in time $O(2^k k n)$ in Section~\ref{sec:saving}, and show that the problem has no polynomial kernel on trees in Section~\ref{sec:burning}.  The no-poly-kernel result was independently obtained by Yang~\cite{Yang2009}.  Based on the parameterized algorithm, we also give an exact subexponential-time algorithm, solving the \firefighter problem on an $n$-vertex tree in time $O(2^{\sqrt{2n}} n^{3/2})$, thus improving on the $2^{O(\sqrt{n}\log{n})}$ running time from~\cite{cai-verbin-yang}.
 On general graphs, we show that the {\maxprotect} problem is W[1]-hard, but is in FPT when parameterized by $k$ and the treewidth of a graph. 
\end{itemize}
Recently, and independent of our work, Bazgan, Chopin, and Fellows~\cite{BazganCF2011} proved several of the results mentioned above. This includes the W[1]-hardness of \save, as well as its membership of FPT on graphs of bounded treewidth, and the membership of FPT of \burn on general graphs, as well as it not having a polynomial kernel on trees. In addition, they consider the parameterization by the vertex cover number of a graph, and the extension of the problem to being able to protect $b$ vertices at any time step. However, they do not consider \maxprotect, algorithms on trees, or exact algorithms.


\section{Saving and Protecting Vertices} \label{sec:saving}
In this section, we consider the complexity of \save and \maxprotect. These problems are known to be fixed-parameter tractable on trees, but their complexity on general graphs was hitherto unknown. We resolve this open problem by giving a W[1]-hardness result for both problems. At the other end of the spectrum, we extend the boundary where \save and \maxprotect remain fixed-parameter tractable by giving parameterized algorithms on graphs of bounded treewidth. Finally, we improve the algorithms known to exist for trees.

\subsection{W[1]-Hardness on General Graphs}
We show that \save and the decision variant of \maxprotect are W[1]-hard, even on bipartite graphs. We reduce from the \textsc{$k$-Clique} problem, which is well known to be
 W[1]-hard~\cite{DowneyF99}.

\begin{theorem} \label{thm:saving:hardness}
\save is W[1]-hard, even on bipartite graphs.
\end{theorem}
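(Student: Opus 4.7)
The plan is to reduce from $k$-\textsc{Multicolored Clique}, a standard W[1]-hard variant of \textsc{Clique}. Given an instance $H = (V_1 \sqcup \cdots \sqcup V_k, E)$ with $|V_i|=n$, I will construct a bipartite graph $G$ with source $s$ and target $K = k(L+1) + \binom{k}{2}$ for a sufficiently large $L = \poly(k)$, such that $H$ has a multicolored $k$-clique if and only if $G$ admits a firefighter strategy saving at least $K$ vertices.

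The core of the construction is a sequence of $k$ \emph{selector layers}: for each color $i$, introduce selector vertices $\{v^i_w : w \in V_i\}$ placed at distance $i$ from $s$, so that fire reaches layer $i$ exactly at time $i$. To prevent the firefighter from halting the fire with a single well-placed protection before the later layers are reached, consecutive layers are joined by dense parallel connections (for instance, complete bipartite inter-layer edges), so that no single vertex separates $s$ from any future layer. Each selector $v^i_w$ carries a bundle of $L$ private reward leaves; with $L$ larger than the total possible gain from all edge witnesses combined, any strategy that fails to protect a selector in some layer $i$ forfeits at least $L$ saved vertices and is thus suboptimal. This forces an optimal strategy to devote exactly one protection per layer, effectively selecting one $w_i \in V_i$ per color. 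To encode edges of $H$, for each $\{w,w'\} \in E(H)$ with $w \in V_i$ and $w' \in V_j$, I add an \emph{edge-witness} vertex $e_{ww'}$ joined to $v^i_w$ and $v^j_{w'}$, subdividing one of the two incidences if necessary to align parities and keep the graph bipartite; such a witness is saved precisely when both of its selector neighbors are protected.

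The forward direction is then immediate: given a multicolored $k$-clique $\{w_1,\dots,w_k\}$, protect $v^i_{w_i}$ at time $i$; this saves all $k$ selectors, their $kL$ reward leaves, and all $\binom{k}{2}$ edge witnesses $e_{w_iw_j}$, for a total of exactly $K$ vertices. Conversely, by the choice of $L$, any strategy saving at least $K$ vertices must protect a selector in every color class, and the surviving edge witnesses then correspond exactly to the edges of $H$ between pairs of selected vertices, so achieving $\binom{k}{2}$ such witnesses forces the selection to form a multicolored $k$-clique.

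The hard part will be making the layer-by-layer forcing fully rigorous: I need to verify that the dense inter-layer connectivity truly rules out every ``short-circuit'' strategy that might block an entire future layer with fewer than one protection per layer, and simultaneously choose the inter-layer and witness path lengths so that the graph remains bipartite and every edge witness burns at the intended time (rather than being spared accidentally by the protection pattern). Once these gadget properties are pinned down, the polynomial bound $K = \poly(k)$ together with the clique equivalence yields the desired W[1]-hardness on bipartite graphs.
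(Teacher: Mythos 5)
Your high-level strategy (layered selectors forced by large private reward bundles, plus edge witnesses counted at the end) is a legitimate alternative to the paper's reduction from plain \textsc{$k$-Clique}, but as written it has two genuine gaps: one in the construction and one in the counting, and the second one makes the stated equivalence false.

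The shortcut problem you flag at the end is real and not a technicality. If the witness $e_{ww'}$ for $w \in V_i$, $w' \in V_j$ with $i<j$ is adjacent (possibly via one subdivision vertex) to both selectors, then once $v^i_w$ burns at time $i$ the fire reaches $v^j_{w'}$ \emph{through the witness} at time roughly $i+2$, which for $j>i+2$ is strictly earlier than intended; through your complete bipartite inter-layer edges this early ignition cascades to every later layer, and the firefighter can no longer protect one selector per layer at the matching time step. Fixing this requires witness paths whose lengths depend on $j-i$, which then changes when witnesses and their internal vertices burn and how much each late protection is worth. The paper avoids the issue entirely by placing all selector vertices $s_v$ at the \emph{same} depth $k$, reached through a $(k-1)\times k$ grid of pairwise-interchangeable dummy vertices that are provably useless to protect, and hanging every edge vertex at depth $k+1$.

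More fundamentally, the threshold $K=k(L+1)+\binom{k}{2}$ does not give an equivalence. The fire keeps spreading after time $k$ (to the reward leaves of the burned depth-$k$ selectors, to witnesses, and to any subdivision vertices), so the firefighter receives at least one further protection at time $k+1$, and each such protection saves at least one additional vertex. Consequently a strategy that selects $k$ selectors spanning only $\binom{k}{2}-1$ edges of $H$ and then protects a single leaf at time $k+1$ already saves $K$ vertices, so the backward direction fails. Since the whole reduction hinges on distinguishing $\binom{k}{2}$ saved witnesses from $\binom{k}{2}-1$, any uncontrolled slack of even one vertex is fatal; the large $L$ only controls the selector phase, not this $O(1)$-scale surplus. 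You would need to pin down exactly how many protections occur after the selector phase, show each saves exactly one extra vertex, and add that number to $K$. This is precisely why the paper sets its threshold to $k+\binom{k}{2}+1$ rather than $k+\binom{k}{2}$, argues that protecting an edge vertex "does not save any further vertices" beyond itself, and has the forward direction explicitly spend the one extra protection on a designated edge vertex outside the clique.
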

\begin{proof}
Let $(G,k)$ be an instance of \textsc{$k$-Clique}. We can assume that $G$ has at least $k+1$ vertices that are not isolated, or we can easily output a trivial \textsc{Yes}- or \textsc{No}-instance. We construct the following bipartite graph $G'$ (see Figure~\ref{fig:saving:hardness}). For each edge $(u,v) \in E(G)$, we add a vertex $s_{uv}$, and for each vertex $v \in V(G)$, we add a vertex $s_{v}$. Call these two sets of vertices $E$ and $V$ respectively. Now add an edge from $s_{uv}$ to both $s_{u}$ and $s_{v}$ for each $(u,v) \in E(G)$. Add a root vertex $s$, and add vertices $a_{i,j}$ for all $1 \leq i \leq k-1$ and $1 \leq j \leq k$. Connect $a_{i,j}$ to $a_{i',j'}$ ($i' = i+1$) for all $i,j,j'$, connect $a_{1,j}$ to $s$ for all $j$, and connect $a_{k-1,j}$ to each vertex of $V$ for all $j$. Now set $k' = k + {k \choose 2} + 1$. 

We claim that \save on $(G',s,k')$ is a \textsc{Yes}-instance if and only if \textsc{$k$-Clique} on $(G,k)$ is a \textsc{Yes}-instance. Suppose that $G$ has a $k$-clique $K$. Then the strategy that protects the vertices $s_{v}$ for all $v \in K$ saves the vertices $s_{uv}$ for all $u,v \in K$. Since $K$ is a clique, these vertices $s_{uv}$ are indeed present in $G'$. Additionally, we can protect (and thus save) a vertex $s_{xy}$ for some edge $xy \not\in E(G[K])$. This edge exists, as $G$ is assumed to have at least $k+1$ nonisolated vertices. It follows that this strategy saves at least $k'$ vertices.

Suppose that $P= \{p_{1},\ldots,p_{\ell}\}$ is a strategy for $(G',s,k')$ that chooses vertex $p_{t}$ at time $t$ and saves at least $k'$ vertices. First observe that if $p_{t} = a_{i,j}$ for some $i,j$, then this vertex is not helpful, as there is always a vertex $a_{i,j'}$ that will be burned at time $t$ and has the same neighborhood as $a_{i,j}$. Hence we can assume that no vertex $a_{i,j}$ is protected by the strategy. This implies that all vertices of $V$ will be burned, except those that are protected by the strategy. But then protecting vertices of $E$ does not save any further vertices. Since the fire will reach $V$ in $k$ time steps, and thus $E$ in $k+1$ time steps, the vertices in $S \cap V$ are responsible for saving ${k \choose 2}$ vertices, which is only possible if the vertices of $S \cap V$ induce a $k$-clique in $G$.
\end{proof}
Observe that essentially the same construction works for the decision variant of \maxprotect.

\begin{figure}
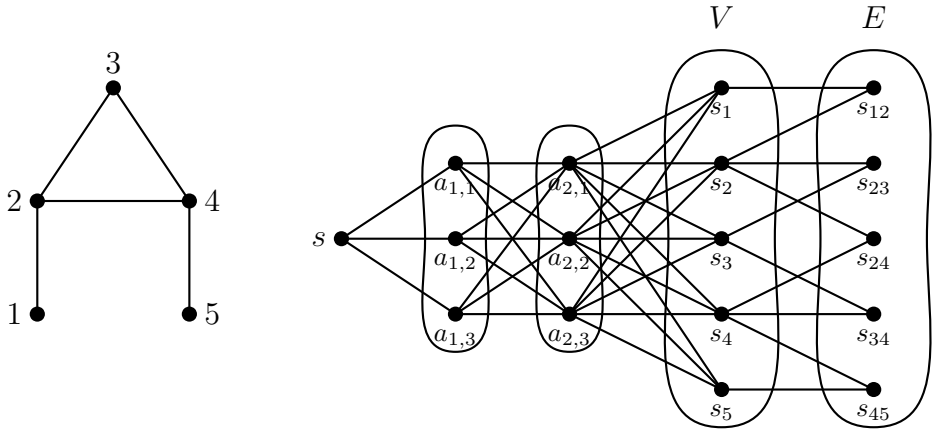

\begin{center}
\ig{saving-hardness.mps}
\end{center}
\caption{An instance of \textsc{$k$-Clique} and the corresponding graph $G'$ constructed in the proof of Theorem~\ref{thm:saving:hardness} for $k=3$.}
\label{fig:saving:hardness}
\end{figure}
\medskip

\begin{theorem}
\maxprotectdecision is W[1]-hard, even on bipartite graphs.
\end{theorem}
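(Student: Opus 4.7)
The plan is to reuse, essentially verbatim, the bipartite graph $G'$ constructed in the proof of Theorem~\ref{thm:saving:hardness} from a \textsc{$k$-Clique} instance $(G,k)$, and output the \maxprotectdecision instance $(G', s, k', K)$ with $k' = k$ firefighters and savings target $K = k + \binom{k}{2}$.

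For the forward direction, given a $k$-clique $\{v_1,\dots,v_k\}$ in $G$, the firefighter protects $s_{v_i}$ at time step $i$ for $i=1,\dots,k$. Since the fire needs exactly $k$ steps to descend through the $a$-layers to $V$, all $k$ vertices $s_{v_i}$ are saved; and for each of the $\binom{k}{2}$ clique edges $v_iv_j$ the edge-vertex $s_{v_iv_j}$ has both of its neighbours protected and is therefore also saved, yielding at least $K$ saved vertices with only $k$ firefighters.

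For the converse, let $p$, $q$, $r$ be the numbers of firefighters the strategy places in $V$, in $E$, and in the $a$-layers respectively, with $p+q+r \le k$. The dominance observation from Theorem~\ref{thm:saving:hardness} still applies: any protection of some $a_{i,j}$ cannot prevent the fire from reaching $V$, since a sibling $a_{i,j'}$ still burns and has the same neighbourhood, so every $V$-vertex not directly protected burns at time $k$. An edge-vertex $s_{uv}$ thus survives only if it is itself protected or if both $s_u$ and $s_v$ are protected. Accordingly the number of saved vertices is at most $p$ from $V$, at most $q + \binom{p}{2}$ from $E$, and at most $r$ from the $a$-layers, for a total of at most $p + q + r + \binom{p}{2} \le k + \binom{k-q-r}{2} \le k + \binom{k}{2}$. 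Equality with $K = k + \binom{k}{2}$ forces $q = r = 0$, $p = k$, and $\binom{p}{2}$ edges among the $p$ protected $V$-vertices, i.e.\ a $k$-clique in $G$.

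The main obstacle is the bookkeeping in the reverse direction: one must verify that no edge-vertex is saved ``for free'' beyond the two mechanisms accounted for above. Because once the fire reaches layer $V$ every vertex there is either burning or permanently protected, and all $a$-layer vertices are either burned or protected by then, the only unburned neighbours of an $E$-vertex at the moment the fire would spread to it are precisely the protected $V$-vertices, so the survival condition reduces cleanly to the two cases used and the upper bound is tight. With the parameters $k' = k$ and $K = k + \binom{k}{2}$ this establishes the reduction, yielding W[1]-hardness on bipartite graphs.
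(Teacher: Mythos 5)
Your proof is correct and follows essentially the same route as the paper: the same reduction from \textsc{$k$-Clique} via the bipartite graph $G'$ of Theorem~\ref{thm:saving:hardness}. The only difference is that the paper sets $k'=k+1$ and $K'=k+\binom{k}{2}+1$ (allowing one extra protected edge-vertex, mirroring the \save reduction), whereas you use the slightly cleaner $k'=k$ and $K=k+\binom{k}{2}$; both parameter choices work, and your explicit accounting for the converse direction is sound.
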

\begin{proof}
We again reduce from \textsc{$k$-Clique} and construct the same bipartite graph as in the proof of Theorem~\ref{thm:saving:hardness}. We set $k' = k+1$ and $K' = k + {k \choose 2} + 1$. Correctness now follows straightforwardly from the arguments in the proof of Theorem~\ref{thm:saving:hardness}.
\end{proof}
The above reduction also implies an NP-hardness reduction, which is simpler than the original reduction for the \firefighter problem on bipartite graphs~\cite{MacGillivrayW03}.

\subsection{Improved Algorithm on Trees}
We show that \save and \maxprotect have a deterministic $O(n + 2^{k}k^{3})$ and $O(2^{k}kn)$ algorithm, respectively, on trees. This resolves an open question of Cai \etal\cite{cai-verbin-yang}. As a consequence, we also obtain a refined subexponential algorithm for the \firefighter problem on trees, running in time $O(2^{\sqrt{2n}} n^{3/2})$.

The following observation is by MacGillivray and Wang \cite[Sect.~4.1]{MacGillivrayW03}.
\begin{lemma} \label{lem:trees:base}
For any optimum strategy for an instance of the \firefighter problem on trees, there is an integer $\ell$ such that all protected vertices have depth at most $\ell$, exactly one vertex $p_{i}$ at each depth $1 \leq i \leq \ell$ is protected, and all ancestors of each $p_{i}$ are burned.
\end{lemma}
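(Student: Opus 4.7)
The plan is to fix an arbitrary optimum strategy $P$ and verify, via exchange arguments applied at a single time step, that $P$ itself already satisfies the three asserted structural properties; whenever a property fails I will exhibit a strictly better alternative strategy, contradicting the optimality of $P$. Setting $\ell := \max_{p \text{ protected by } P} \depth(p)$, I will check in order: (a) every ancestor of every protected vertex of $P$ burns under $P$; (b) the depths of the protected vertices are pairwise distinct and cover exactly $\{1,2,\ldots,\ell\}$; and (c) no protection occurs at depth greater than $\ell$, which is immediate from the definition of $\ell$.

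For (a), assume some protected vertex $p$ has an unburned ancestor. Then $p$ lies in the subtree of a protected-on-time ancestor $q$ of $p$, so the action spent on $p$ saves no vertex that is not already saved by $q$. At the time slot $t$ at which $P$ protects $p$, I choose an unburned vertex $u$ on the current fire frontier — i.e.\ a vertex at depth $t$ whose parent has just ignited — and use that slot to protect $u$ instead; this adds at least $|T_u|\geq 1$ newly saved vertices, contradicting the optimality of $P$. Property (a) immediately forces the protected vertices to lie on pairwise-incomparable root-to-leaf paths, hence at pairwise distinct depths. For (b), if some depth $d\leq \ell$ carried no protected vertex, I take the shallowest protected $p_i$ with $\depth(p_i)>d$ and its depth-$d$ ancestor $a$; by (a) the vertex $a$ burns in $P$, but it is still unburned at the start of time $d$, so after a standard swap reordering the protection times to increase with depth, the slot spent on $p_i$ can be reassigned to protect $a$, which saves the strictly larger subtree $T_a \supsetneq T_{p_i}$ and again contradicts optimality.

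The main obstacle is the corner case in (a) where $P$ already saves every non-source vertex of the tree: here the fire frontier is empty at time $t$ and no replacement $u$ exists. This regime can only arise when the input admits a strategy saving the full complement (for example, when $s$ has a unique child that is protected), in which case the desired structure is trivial with $\ell = 1$, and any further wasteful protection in $P$ must be excised from the lemma's reading — in effect, the lemma is to be applied with the implicit restriction that $P$ is a minimum-size optimum, a restriction that is harmless in every downstream use in the paper. Justifying the time-reordering in (b) is the other step needing care, but it follows from an elementary exchange: transposing two actions preserves validity and the saved set, so we may assume protection times are sorted by depth before arguing about which depths are missing.
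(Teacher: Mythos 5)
First, note that the paper does not actually prove this lemma: it is stated as an observation of MacGillivray and Wang and used as a black box, so your attempt is being compared against a citation rather than an in-paper argument. Your overall strategy (local exchanges contradicting optimality) is the standard and correct route, and part (a) together with the ``missing depth'' step of (b) are essentially sound: a protected vertex with an unburned ancestor necessarily has a \emph{protected} ancestor already blocking the fire from its whole subtree, so its slot can be re-spent on a frontier vertex, which exists and is otherwise doomed whenever the process is still running (under the paper's convention that the process halts once the fire cannot spread, your corner case never arises, since no protection is performed after that point); and a skipped depth $d$ can be repaired by protecting the depth-$d$ ancestor $a$ of the shallowest deeper protected vertex, gaining at least $a$ itself.

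The genuine gap is in your derivation of ``pairwise distinct depths.'' Property (a) gives that no protected vertex is an ancestor of another, hence the protected vertices are pairwise incomparable --- but incomparable vertices can perfectly well share a depth (two siblings, for instance), so ``pairwise-incomparable \ldots hence at pairwise distinct depths'' is a non sequitur. The missing exchange is the following: by (a) the parent of a protected vertex $p$ burns at round $\depth(p)-1$, so $p$ must be protected at some round $t\le\depth(p)$; if $t<\depth(p)$, then the parent of $p$ is still unburned at the start of round $t$ and is unprotected (it burns under the given strategy), so protecting the parent instead saves a strictly larger subtree and at least one vertex not saved before (the parent itself), contradicting optimality. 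Hence every protected vertex is protected at the round equal to its depth, and distinctness of depths follows because each round protects at most one vertex. This same timing bound is also what you need to justify your reordering step: the blanket claim that ``transposing two actions preserves validity and the saved set'' is false in general, since delaying a protection past the round at which the fire reaches that vertex invalidates it; the depth-sorted swaps are legitimate only because, by (a), a vertex of depth $\depth(p)$ survives unprotected until the start of round $\depth(p)$, and the slot it is moved to is no later than that.
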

We need the following notation. Let $T$ be any rooted tree. Use a pre-order traversal of $T$ to number the vertices of $T$ from $1$ to $n$. We say that $u \in V(T)$ is \emph{to the left} of $v \in V(T)$ if the number assigned to $u$ is not greater than the number of $v$ in the order. It is then easy to define what the \emph{leftmost} or \emph{rightmost} vertex is.

\begin{theorem} \label{thm:trees:vertex-protection}
\maxprotect has an $O(2^{k}kn)$-time algorithm on trees.
\end{theorem}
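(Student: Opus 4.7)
My plan is a bottom-up tree DP that exploits the structural simplification of Lemma~\ref{lem:trees:base}.

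Lemma~\ref{lem:trees:base} lets me restrict attention to strategies that, for some $\ell\le k$, protect exactly one vertex $p_i$ at each depth $i\in\{1,\ldots,\ell\}$. The set $\{p_1,\ldots,p_\ell\}$ is automatically an antichain (any ancestor of some $p_i$ burns, so cannot be protected), so the subtrees $T_{p_1},\ldots,T_{p_\ell}$ are pairwise disjoint and the number of saved vertices equals $\sum_{i=1}^{\ell}|T_{p_i}|$. The task therefore reduces to finding an antichain $\{p_1,\ldots,p_\ell\}$ in $T$ with $\ell\le k$ and $d(p_i)=i$ that maximizes $\sum_i|T_{p_i}|$.

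For each vertex $v$ and each subset $S\subseteq\{1,\ldots,k\}$ of time slots dedicated to the subtree $T_v$, I define $f(v,S)$ as the maximum total savings achievable inside $T_v$ by a sub-antichain whose vertex depths all lie in $S$. There are $O(n\cdot 2^k)$ such states, and the final answer can be read from $f(s,\{1,\ldots,k\})$ (which dominates all smaller $S$ by monotonicity). At an internal vertex $v$ the recurrence offers two options: either $v$ itself is protected, which is allowed exactly when $d(v)\in S$ and contributes $|T_v|$; or $v$ is not protected, in which case the slots of $S$ must be partitioned among the children of $v$.

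To compute the child-splitting step in the required time, I would process the children of $v$ in the left-to-right order defined just before the theorem, maintaining an accumulator over the prefix of children processed so far. Using Lemma~\ref{lem:trees:base} together with an exchange argument, I would argue that it suffices to consider partitions in which each child $c_i$ receives a prefix, in the natural order on $\{1,\ldots,k\}$, of the slots of $S$ still unassigned after $c_1,\ldots,c_{i-1}$. Under this convention, merging child $c_i$ into the accumulator reduces to deciding how many slots that child takes, one of at most $k+1$ choices per state $S$. The work per vertex is then $O(2^k\cdot k)$ (with a factor $r$ absorbed into the global sum $\sum_v r_v=n-1$), and the overall running time is $O(2^k k n)$.

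The main obstacle will be the exchange argument justifying the prefix convention. The idea is to show that from any optimal antichain one can swap which protected vertex each depth corresponds to across sibling subtrees, using the antichain property and Lemma~\ref{lem:trees:base}, without decreasing the savings, so that in the canonical form each child's share of $S$ is indeed the next prefix of the yet-unassigned slots. Once this normalization is established, the rest is routine implementation of the DP described above.
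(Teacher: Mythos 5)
There is a genuine gap: the ``prefix convention'' you rely on to make the child-splitting step cheap is false, and the exchange argument you hope will justify it cannot work. The set of depth-slots that an optimal strategy assigns to a single child's subtree need not be contiguous. Concretely, let the root have children $c_1,c_2,c_3$, let the strategy protect $c_2$ at time $1$, and suppose $T_{c_1}$ contains a $100$-vertex subtree hanging at depth $2$ and another $100$-vertex subtree hanging at depth $4$ (in a different branch, so the two roots form an antichain), while $T_{c_3}$ contains a $100$-vertex subtree hanging only at depth $3$. The unique optimal assignment gives $T_{c_1}$ the slots $\{2,4\}$ and $T_{c_3}$ the slot $\{3\}$; no reordering of the children makes each share a contiguous block, and any exchange of ``which subtree owns which depth'' changes the number of saved vertices because the subtree sizes differ. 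Without the prefix normalization, merging a child into your accumulator is a genuine subset-partition step, $g(S)=\max_{S'\subseteq S}\bigl(g_{\mathrm{acc}}(S')+f(c,S\setminus S')\bigr)$, which costs $\sum_{S}2^{|S|}=3^{k}$ per vertex, not $2^{k}k$, so your plan only yields an $O(3^{k}\cdot\mathrm{poly})$ bound.

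The paper avoids this obstacle entirely by abandoning the bottom-up child-merge in favor of a left-to-right sweep over the vertices of depth at most $k$ in pre-order. The state $A_{v}(\chi,i)$ records the subset $\chi$ of depth-slots still unused among all vertices to the left of $v$ (a running union, which never has to be split among siblings), plus a second index $i$ that forbids protecting vertices on a prefix of the root-to-$v$ path so that no ancestor of an already-protected vertex is protected later; pre-order itself rules out protecting descendants. Each vertex then has only two transitions (protect or skip), giving $O(2^{k}k)$ work per vertex. If you want to rescue your write-up, this change of decomposition --- from a tree-structured DP to a linear pre-order DP with the extra ancestor index --- is the missing idea.
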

\begin{proof}
Let $(T,s,k)$ be an instance of \maxprotect on a tree $T$. Assume that $T$ is rooted at $s$. By Lemma~\ref{lem:trees:base}, we can define a characteristic vector $\chi_{v}$ of length $k$ for each vertex $v$ of the tree, which has a $1$ at position $i$ if and only if the optimal strategy protects a vertex at depth $i$ in the part of the tree to the left of $v$. We use these vectors as the basis for a dynamic programming procedure. However, the vector cannot ensure that no ancestors of a protected vertex will be protected. To ensure this, we add another dimension to our dynamic programming procedure. The pre-order numbering ensures that no descendant is protected.

The dynamic programming algorithm is then as follows.
Let $L$ be the set of vertices in $T$ that are at depth at most $k$. 
For each $v \in L$, let $P_{v}$ denote the path in $T$ between $v$ and $s$.
For each vector $\chi \subseteq \{0,1\}^{k}$ and each integer $0 \le i \le k$,  we compute $A_{v}(\chi,i)$, the maximum number 
of vertices one can save when protecting at most one vertex at depth $j$ for each $j$ for which $\chi(j)=1$ and no vertex otherwise,
where protected vertices must lie to the left of $v$ but at depth greater than $i$ when lying on $P_{v}$, and no protected vertex is an ancestor of another.
Observe that $s$ is the leftmost vertex of $L$. Now set $A_{s}(\chi,i)=0$ for any $\chi$ and $i$. Then
\begin{eqnarray*}
\label{eq:saving}
A_{v}(\chi,i) & = \max \big\{ & A_{l(v)}(\chi,\min\{\depth(v)-1,i\}),\\
  & & [\chi(\depth(v))=1 \wedge \depth(v)>i] \cdot \\
  & & \ (r(v) + A_{l(v)}(\chi^{v},\depth(v)-1)) \ \big\}
\end{eqnarray*}
Here $\depth(v)$ is the depth of a vertex $v$,
$l(v)$ is the rightmost vertex in $L$ which has strictly smaller value in the pre-order than $v$, and
$r(x)$ is the number of vertices saved when protecting only $x$.
Moreover, $\chi^{v}$ is the $0$-$1$ vector obtained from $\chi$ by setting the number at the index of $\chi$ corresponding to $\depth(v)$ to $0$.
In the formula we use Iverson's bracket notation, where $[\phi]$ is equal to one
if $\phi$ is true and zero otherwise.

To see that the above formula is correct, observe that we can either protect the considered vertex $v$ or not.
If we do not protect $v$, then we must ensure that the value for the second dimension of our dynamic programming procedure does not exceed the length of $P_{v}$, yet still captures the same forbidden part of $P_{v}$. Correctness then follows from the fact that the parent of $v$ is always on $P_{l(v)}$.
If we do protect $v$, we can protect $v$ only if we are allowed to do so, \ie if $\chi(\depth(v))=1$ and $\depth(v)>i$.
Furthermore, we need to ensure that no ancestor of $v$ is protected later. Therefore, we set the value for the second dimension of our dynamic programming procedure to $\depth(v)-1$.

To get the solution for the whole tree $T$, return $A_{v^{*}}(1^{k},0)$, where $v^{*}$ is the rightmost vertex of $L$.
To obtain the claimed running time, first find $L$, and then $l(v)$ for each vertex $v \in L$.
This can be done in linear time by a depth-first search. 
We can also compute $r(x)$ for each $x \in V(T)$ in linear time, 
as $r(x)$ equals one plus the number of descendants of $x$. 
By traversing the vertices of $L$ from left to right, the total running time is $O(2^{k}kn)$.
\end{proof}

\begin{corollary} \label{cor:trees:saving}
\save has an $O(2^{k}kn)$-time algorithm on trees.
\end{corollary}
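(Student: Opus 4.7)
The plan is to reduce \save directly to \maxprotect and then invoke Theorem~\ref{thm:trees:vertex-protection}. The key observation I want to exploit is that, on a tree, every deployed firefighter permanently saves at least itself: once a vertex $v$ is protected, the entire subtree rooted at $v$ is disconnected from the fire, and in particular $v$ itself is saved. Therefore any strategy that deploys at least $k$ firefighters automatically saves at least $k$ vertices, and \save on a tree is equivalent to asking whether some strategy using \emph{at most} $k$ firefighters saves at least $k$ vertices.

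More precisely, I would prove the following equivalence: $(T,s,k)$ is a yes-instance of \save if and only if the optimum computed by the \maxprotect algorithm on $(T,s,k)$ is at least $k$. The backward direction is immediate, since any strategy using at most $k$ firefighters saving at least $k$ vertices is itself a witness for \save. For the forward direction, take any strategy $S$ saving at least $k$ vertices. By Lemma~\ref{lem:trees:base}, I may assume $S$ protects exactly one vertex $p_i$ at each depth $i=1,\dots,\ell$ for some $\ell$. If $\ell\le k$ then $S$ already uses at most $k$ firefighters and we are done. If $\ell>k$, I would discard the firefighters at depths $k+1,\dots,\ell$ and keep only $p_1,\dots,p_k$. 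Since the subtrees rooted at $p_1,\dots,p_k$ are pairwise disjoint and each $p_i$ is saved regardless of the later moves, this truncated strategy still saves at least $k$ vertices while using only $k$ firefighters.

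Given this equivalence, I would run the dynamic programming of Theorem~\ref{thm:trees:vertex-protection} on $(T,s,k)$ and answer \textsc{Yes} iff the returned value is at least $k$. The running time is inherited from Theorem~\ref{thm:trees:vertex-protection}, namely $O(2^{k}kn)$. The only step that needs any care is the truncation argument in the forward direction; everything else is mechanical. That step is clean precisely because we are on a tree: a firefighter at depth $i$ cannot be ``undone'' by later events in the strategy, so removing deeper firefighters never causes $p_1,\ldots,p_k$ or their subtree contributions to shrink.
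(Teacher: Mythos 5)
Your proof is correct, and it reaches the same destination as the paper's --- a reduction to the tree algorithm for \maxprotect of Theorem~\ref{thm:trees:vertex-protection} --- but by a slightly different route. The paper simply runs that algorithm for every budget $k'=1,\ldots,k$, accepts if any run saves at least $k$ vertices, and bounds the total cost by the geometric sum $\sum_{i=1}^{k}2^{i}in \le (2^{k+1}-2)kn = O(2^{k}kn)$; no structural claim about strategies is needed beyond what Theorem~\ref{thm:trees:vertex-protection} already provides. You instead run the algorithm once with budget $k$ and justify this with a truncation lemma: starting from the canonical form of Lemma~\ref{lem:trees:base}, discarding all firefighters at depths beyond $k$ leaves $k$ pairwise disjoint saved subtrees, each containing its own protected root, hence at least $k$ saved vertices. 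That argument is sound (protected vertices never burn, and no $p_i$ is an ancestor of another $p_j$ since all ancestors of protected vertices are burned), and it buys a single invocation of the dynamic program rather than $k$ of them --- a constant-factor saving with the same asymptotic bound. The one thing your version leans on that the paper's does not is that the dynamic program of Theorem~\ref{thm:trees:vertex-protection} accepts strategies protecting \emph{fewer} than $k$ vertices (it allows at most one protected vertex per depth, including none); this is indeed how $A_{v}(\chi,i)$ is defined, so your case $\ell\le k$ goes through, but it is worth stating explicitly, since the problem statement of \maxprotect reads as ``exactly $k$'' firefighters and the paper's loop over $k'$ is precisely the hedge against that reading.
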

\begin{proof}
Let $(T,s,k)$ be an instance of \save on a tree $T$. We run the above algorithm for \maxprotect for all $k' = 1,\ldots,k$. Observe that it is possible to save $k$ vertices of the tree if and only if the algorithm saves at least $k$ vertices for some value of $k'$. Furthermore, we note that
$$\sum_{i=1}^{k} (2^{i} i n)\ \leq\ kn \sum_{i=1}^{k} 2^{i} \ =\ (2^{k+1}-2)kn,$$
implying that the worst-case running time of the algorithm is $O(2^{k}kn)$.
\end{proof}
Using the known kernel of size $O(k^{2})$~\cite{cai-verbin-yang}, we can improve running time of the above algorithm for \save to $O(n + 2^{k}k^{3})$.

To obtain a good subexponential algorithm, we use the following lemma.

\begin{lemma} \label{lem:trees:sub-aux}
If a vertex at depth $d$ burns in an optimum strategy for an instance of the \firefighter problem on trees, then at least $\frac{1}{2}(d^{2} + d)$ vertices are saved.
\end{lemma}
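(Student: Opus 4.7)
The plan is to use the canonical form of Lemma~\ref{lem:trees:base} together with an exchange argument at each depth to lower-bound the size of each protected subtree. Root $T$ at $s$, let $v$ be a burning vertex at depth $d$, and write $a_i$ for the ancestor of $v$ at depth $i$, so that $a_0 = s$ and $a_d = v$; since fire reaches $v$, every $a_i$ burns as well. By Lemma~\ref{lem:trees:base}, the optimum protects a single vertex $p_i$ at each depth $i = 1,\ldots,\ell$ with $\ell \ge d$, no $p_i$ is an ancestor of another, and the saved vertices are precisely the union of the pairwise disjoint subtrees $T(p_i)$ rooted at the $p_i$. Hence the total number of saved vertices equals $\sum_{i=1}^{\ell} |T(p_i)|$, and it will suffice to prove that $|T(p_i)| \ge d - i + 1$ for each $1 \le i \le d$, since then $\sum_{i=1}^{d}(d-i+1) = d(d+1)/2$.

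For a fixed $i$ with $1 \le i \le d$, I would consider the alternative strategy $\tau$ that mimics the optimum at times $j < i$, protects $a_i$ (rather than $p_i$) at time $i$, and at each subsequent time $j > i$ protects $p_j$ whenever $p_j \notin T(a_i)$. Note that $p_i \ne a_i$, since otherwise $v \in T(p_i)$ would be saved; hence $a_i$ is on the fire's frontier at time $i$ under $\tau$, because its parent $a_{i-1}$ burns in both strategies. For $j > i$ with $p_j \notin T(a_i)$, the parent of $p_j$ lies outside $T(a_i)$, so the swap at time $i$ does not affect whether that parent burns, and $p_j$ remains protectable. Since the protected subtrees used by $\tau$ are pairwise disjoint, the number of vertices saved by $\tau$ is
\[ \sum_{j<i} |T(p_j)| \;+\; |T(a_i)| \;+\; \sum_{j>i,\ p_j \notin T(a_i)} |T(p_j)|. \]

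By optimality this is at most $\sum_j |T(p_j)|$, which after rearrangement gives $|T(p_i)| + \sum_{j>i,\ p_j \in T(a_i)} |T(p_j)| \ge |T(a_i)|$. Now decompose $T(a_i)$ under the optimum into its saved part (which is exactly the disjoint union of the $T(p_j)$ for $j > i$ with $p_j \in T(a_i)$) and its burned part $R_i$: this yields $|T(a_i)| = \sum_{j>i,\ p_j \in T(a_i)} |T(p_j)| + |R_i|$. Substituting produces $|T(p_i)| \ge |R_i|$, and since $a_i, a_{i+1}, \ldots, a_d$ are $d - i + 1$ distinct burned vertices of $T(a_i)$, we conclude $|T(p_i)| \ge d - i + 1$, as required. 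The main obstacle is the feasibility check for $\tau$---specifically, verifying that no later $p_j$ becomes unprotectable after the swap at time $i$---after which everything reduces to careful bookkeeping of disjoint subtrees.
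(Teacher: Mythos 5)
Your proof is correct and takes essentially the same route as the paper: the paper's argument is precisely the exchange ``replace $p_i$ by the depth-$i$ ancestor of $v$ and invoke optimality'' to conclude that the subtree rooted at $p_i$ has at least $d-i+1$ vertices, stated in a single sentence. You have merely carried out the feasibility check and the disjoint-subtree bookkeeping (via the intermediate bound $|T(p_i)| \ge |R_i|$) explicitly.
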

\begin{proof}
Let $(T,s)$ be an instance of the \firefighter problem on trees, and let $v$ be a vertex of depth $d$ that burns in an optimum strategy. Then the strategy protects a vertex at depth $d$, and by Lemma~\ref{lem:trees:base} it thus protects a vertex $p_{i}$ at each depth $i$ for $1 \leq i \leq d$. For any $i$, the subtree rooted at $p_i$ should contain at least $d-i+1$ vertices, or it would have been better to protect the vertex at depth $i$ that is on the path from $v$ to $s$. But then the strategy saves at least $\sum_{i=1}^{d} (d-i+1) = \frac{1}{2}(d^{2} + d)$ vertices.
\end{proof}

\begin{theorem} \label{thm:trees:sub}
The \firefighter problem has an $O(2^{\sqrt{2n}} n^{3/2})$-time algorithm on trees.
\end{theorem}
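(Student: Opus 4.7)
The plan is to reduce the exact \firefighter problem on trees to running the \maxprotect algorithm of Theorem~\ref{thm:trees:vertex-protection} for all plausible values of $k$, where the key observation is that an optimum strategy never needs more than $O(\sqrt{n})$ firefighters. Concretely, for every $k = 1, 2, \ldots, K$ for an appropriate threshold $K = O(\sqrt{n})$, I would invoke the $O(2^{k} k n)$ algorithm and return the best solution found, taking the special case of protecting no vertex into account as well.

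The central step is to prove that any optimum strategy uses at most roughly $\sqrt{2n}$ firefighters. Let $k^{*}$ be the number of firefighters employed by a fixed optimum strategy. By Lemma~\ref{lem:trees:base} this strategy protects exactly one vertex $p_{i}$ at each depth $i = 1, \ldots, k^{*}$, and all ancestors of $p_{k^{*}}$ (which sit at depths $0, 1, \ldots, k^{*}-1$) are burned. I then split into two cases depending on whether some vertex at depth $k^{*}$ burns. In the first case, Lemma~\ref{lem:trees:sub-aux} applied with $d = k^{*}$ immediately gives at least $\tfrac{1}{2}(k^{*2}+k^{*})$ saved vertices, so $n \geq \tfrac{1}{2}k^{*}(k^{*}+1)$. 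In the second case, no vertex at depth $k^{*}$ burns, so the deepest burning vertex has depth exactly $k^{*}-1$ (witnessed by the parent of $p_{k^{*}}$); Lemma~\ref{lem:trees:sub-aux} then yields at least $\tfrac{1}{2}(k^{*2}-k^{*})$ saved vertices, while the $k^{*}$ ancestors of $p_{k^{*}}$ contribute $k^{*}$ distinct burned vertices, and adding these gives $n \geq \tfrac{1}{2}k^{*}(k^{*}+1)$ as well. Solving this inequality yields $k^{*} \leq \sqrt{2n}$ (up to lower-order terms).

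Once the bound is established, I set $K = \lceil \sqrt{2n}\, \rceil$ and run the \maxprotect algorithm from Theorem~\ref{thm:trees:vertex-protection} for each $k \in \{1, 2, \ldots, K\}$, returning the maximum number of saved vertices over all runs. The total running time is
\[
\sum_{k=1}^{K} O\bigl(2^{k} k\, n\bigr) = O\bigl(2^{K} K\, n\bigr) = O\bigl(2^{\sqrt{2n}}\, n^{3/2}\bigr),
\]
matching the claimed bound.

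The main obstacle is the edge case in the bound on $k^{*}$: if the parent of $p_{k^{*}}$ happens to have $p_{k^{*}}$ as its only child at depth $k^{*}$, no vertex at depth $k^{*}$ burns and Lemma~\ref{lem:trees:sub-aux} does not directly give the full $\tfrac{1}{2}(k^{*2}+k^{*})$ lower bound. The trick there is to recover the missing $k^{*}$ saved vertices by accounting for the $k^{*}$ burned ancestors of $p_{k^{*}}$, using the identity $n = \text{(saved)} + \text{(burned)}$.
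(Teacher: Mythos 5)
Your proposal is correct and follows essentially the same route as the paper: both use Lemma~\ref{lem:trees:base} and Lemma~\ref{lem:trees:sub-aux} to argue that an optimum strategy protects at most $\sqrt{2n}$ vertices, and then invoke the algorithm of Theorem~\ref{thm:trees:vertex-protection}. The only (immaterial) differences are that the paper derives the bound by noting that a burning vertex at depth $\sqrt{2n}$ would force more than $n$ saved vertices---so all protected vertices lie at depth at most $\sqrt{2n}$---whereas you bound the count $k^{*}$ directly with a two-case analysis, and that the paper runs the algorithm once with $k=\sqrt{2n}$ rather than looping over all smaller values.
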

\begin{proof}
Let $(T,s)$ be an instance of the \firefighter problem on trees. Suppose that a vertex $v$ at depth $\sqrt{2n}$ burns in an optimum strategy. Then, by Lemma~\ref{lem:trees:sub-aux}, the strategy saves at least $n + \sqrt{n/2} > n$ vertices, which is not possible. It follows that all vertices at depth $\sqrt{2n}$ are saved in any optimum strategy. Since in any optimum strategy every protected vertex has a burned ancestor by Lemma~\ref{lem:trees:base}, all protected vertices are at depth at most $\sqrt{2n}$. Hence there is an optimum strategy that protects at most $\sqrt{2n}$ vertices, and we can find the optimum strategy by running the algorithm of Theorem~\ref{thm:trees:vertex-protection} with $k = \sqrt{2n}$.
\end{proof}

\subsection{Tractability on Graphs of Bounded Treewidth}
We generalize the above results by showing that \maxprotect and \save remain fixed-parameter tractable when parameterized by $k$ and the treewidth of the underlying graph. To this end, we use Monadic Second Order Logic (MSOL).
The syntax of MSOL of graphs includes the logical connectives $\vee$, $\land$, $\neg$, $\Leftrightarrow $,  $\Rightarrow$, variables for vertices, edges, sets of vertices, and sets of edges, the quantifiers $\forall$, $\exists$ that can be applied to these variables, and the following four binary relations: 
\begin{enumerate}
\item $u\in U$, where $u$ is a vertex variable and $U$ is a vertex set variable.
\item $d \in D$, where $d$ is an edge variable and $D$ is an edge set variable.
\item ${\adj}(u,v)$, where $u, v$ are vertex variables, and the interpretation is that $u$ and $v$ are adjacent.
\item Equality, $=$, of variables representing vertices, edges, sets of vertices, and sets of edges.
\end{enumerate}
For \maxprotect, we actually need Linear Extended MSOL~\cite{ArnborgLS1991}, which allows the maximization over a linear combination of the size of unbound set variables in the MSOL formula. (The definition of LEMSOL in~\cite{ArnborgLS1991} is slightly more general, but this suffices for our purposes.)

\begin{theorem}
\label{thm:saving-treewidth}
\maxprotect is in FPT when parameterized by $k$ and the treewidth of the graph.
\end{theorem}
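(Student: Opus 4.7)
The plan is to express the optimization problem in Linear Extended MSOL (LEMSOL) by a formula $\phi$ of size depending only on $k$, and then invoke the metatheorem of Arnborg, Lagergren, and Seese~\cite{ArnborgLS1991}, which evaluates any LEMSOL formula on an $n$-vertex graph of treewidth $w$ in time $f(|\phi|,w)\cdot n$.

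I would encode a firefighter's strategy by $k$ vertex variables $p_1,\ldots,p_k$, one per time step, and additionally introduce $k+2$ set variables $B_0,B_1,\ldots,B_k,B^{*}$ describing the burning process: $B_0=\{s\}$ is the ignition site, $B_t$ for $1\le t\le k$ collects the vertices newly ignited at time $t$, and $B^{*}$ collects all vertices that burn strictly after time $k$ (when no further firefighters are placed). The MSOL constraints inside $\phi$ then say: (i) $B_0=\{s\}$; (ii) $p_i\notin B_0\cup\cdots\cup B_{i-1}$ for each $i$, so firefighters are placed on unburned vertices; (iii) for each $1\le t\le k$, $B_t$ is exactly the set of vertices lying outside $B_0\cup\cdots\cup B_{t-1}\cup\{p_1,\ldots,p_t\}$ but adjacent to some vertex of $B_{t-1}$; and (iv) $B^{*}$ is the union of those connected components of $G\setminus\{p_1,\ldots,p_k\}$ that meet $B_k$. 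Item (iv) uses the standard MSOL encoding of connectivity by existential quantification over vertex subsets, together with the requirement that the quantified component be closed under taking neighbours outside $\{p_1,\ldots,p_k\}$.

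Since the sets $B_0,B_1,\ldots,B_k,B^{*}$ are pairwise disjoint and together comprise exactly the burned vertices, the number of saved vertices equals $n-(|B_0|+|B_1|+\cdots+|B_k|+|B^{*}|)$. Feeding this linear combination to LEMSOL as an objective to be maximized selects, over all choices of $p_1,\ldots,p_k$ and all sets $B_0,\ldots,B_k,B^{*}$ satisfying $\phi$, precisely the strategy that saves the most vertices. The formula has size bounded by a function of $k$ only (roughly $O(k^2)$, since only $O(1)$ additional clauses of size $O(k)$ are added per time step), so Arnborg--Lagergren--Seese yields a running time of the shape $f(k,w)\cdot n$, establishing the claimed fixed-parameter tractability.

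The main conceptual obstacle is that MSOL has no native notion of time, whereas the firefighting process is inherently temporal. This is circumvented by introducing one explicit set variable $B_t$ per time step for the first $k$ steps, and then, upon observing that no further protection happens after step $k$, collapsing the entire remainder of the spread into the single connectivity statement defining $B^{*}$. This keeps the number of set variables bounded by a function of $k$, which is what makes the LEMSOL approach applicable.
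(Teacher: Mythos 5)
Your proposal is correct and follows essentially the same route as the paper: encode the first $k$ rounds of the burning process with one set variable per time step plus vertex variables for the protected vertices, and invoke the Arnborg--Lagergren--Seese metatheorem for (Linear Extended) MSOL on bounded-treewidth graphs. The only cosmetic difference is in handling the spread after step $k$: you define the remaining burned vertices $B^{*}$ via MSOL connectivity and minimize the burned total, whereas the paper existentially quantifies a burned superset $B$ and maximizes a saved set $S$ constrained to be closed under neighborhoods except at protected vertices --- the two formulations are equivalent.
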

\begin{proof}
Let $(G,s,k)$ be an instance of \maxprotect such that the treewidth of $G$ is $t$. Use Bodlaender's Algorithm~\cite{Bodlaender1996} to find a tree decomposition of $G$ of width at most $t$. Consider the following MSOL formulae.
\begin{eqnarray*}
\lefteqn{\mathrm{NextBurn}(B_{i-1}, B_{i}, p_{1},\ldots,p_{i})\ :=} \\
&\forall v\, \Big(\Big(v \in B_{i-1} \vee \exists u\, \Big(u \in B_{i-1} \wedge \adj(u,v) \wedge \Big(\bigwedge_{1 \leq j \leq i} v \not= p_{j}\Big)\Big)\Big) \Leftrightarrow v \in B_{i}\Big)
\end{eqnarray*}
This expresses is that if the vertices of $B_{i-1}$ are burning by time step $i-1$, then the vertices of $B_{i}$ burn by time step $i$, assuming that vertices $p_{1},\ldots,p_{i}$ have been protected so far.
\begin{eqnarray*}
\lefteqn{\mathrm{Saved}(S,B, p_{1},\ldots,p_{\ell})\ :=}\\
&& \forall u \Big( u \in S \Rightarrow \Big(u \not\in B \wedge \forall v \Big(\adj(u,v) \Rightarrow v \in S \vee \bigvee_{1 \leq i \leq \ell} p_{i} = u \Big) \Big) \Big)
\end{eqnarray*}
This expresses that $S$ is a set of saved vertices when $B$ is a set of burned vertices and vertices $p_{1},\ldots,p_{\ell}$ are protected.
\begin{eqnarray}
\lefteqn{\mathrm{Protect}(S, \ell) := \exists p_{1},\ldots,p_{\ell}\ \exists B, B_{0},\ldots,B_{\ell-1}}\hspace{2cm} \nonumber \\
&& \quad  \forall u\ (u \in B_{0} \Leftrightarrow u = s) \label{eq:1} \\
&& \wedge\ \bigwedge_{1 \leq i \leq \ell-1} \mathrm{NextBurn}(B_{i-1},B_{i},p_{1},\ldots,p_{i}) \label{eq:2} \\
&& \wedge\ \bigwedge_{1 \leq i \leq \ell} p_{i} \not\in B_{i-1} \label{eq:3} \\
&& \wedge\ \forall u\ \Big( \Big(\bigvee_{0 \leq i \leq \ell-1} u \in B_{i}\Big) \Rightarrow u \in B \Big) \label{eq:4} \\
&& \wedge\ \mathrm{Saved}(S,B,p_{1},\ldots,p_{\ell}) \label{eq:5}
\end{eqnarray}
This expresses that $S$ can be saved by protecting $\ell$ vertices. The sets $B_{i}$ contain all vertices that are burned by time step $i$, which is ensured by the formulas in lines \ref{eq:1} and \ref{eq:2}. The set $B$ contains vertices that are not saved (line \ref{eq:5}) and all vertices of the sets $B_{i}$ (line \ref{eq:4}). The vertices $p_{1},\ldots,p_{\ell}$ are the vertices that are protected. Line \ref{eq:3} ensures that the vertices we want to protect are not burned by the time we pick them. Then we want to find the largest set $S$ such that
$$\mathrm{Protect}_{k}(S) := \bigvee_{1 \leq \ell \leq k} \mathrm{Protect}(S,\ell)$$
is true. Following a result of Arnborg, Lagergren, and Seese~\cite{ArnborgLS1991}, this can be done in $f(k,t) \cdot n^{O(1)}$ time using the above formula.
\end{proof}
In the same way as Corollary~\ref{cor:trees:saving}, we then obtain the following.

\begin{corollary}
\save is in FPT when parameterized by $k$ and the treewidth of the graph.
\end{corollary}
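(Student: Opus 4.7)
The plan is to reduce \save directly to \maxprotect, mirroring the argument used in Corollary~\ref{cor:trees:saving}. The key observation I would use is that any strategy which saves $k$ vertices needs to protect at most $k$ vertices, since every protected vertex is itself saved (a protection that does not keep at least the protected vertex unburned can simply be dropped, and the number of saved vertices does not decrease). Hence an instance $(G,s,k)$ of \save is a \textsc{Yes}-instance if and only if there exists some $k' \in \{1,\dots,k\}$ such that the optimum of \maxprotect on $(G,s,k')$ is at least $k$.

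Concretely, I would take the given instance $(G,s,k)$ of \save, compute a tree decomposition of $G$ of width $t$ using Bodlaender's algorithm, and then invoke the algorithm of Theorem~\ref{thm:saving-treewidth} for each $k' = 1,\dots,k$. By the observation above, I answer \textsc{Yes} if and only if at least one of these calls reports that at least $k$ vertices can be saved while protecting $k'$ vertices.

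For the running time, each call runs in $f(k',t) \cdot n^{O(1)}$ time for some computable function $f$, and since $k' \le k$ the bound $f(k',t) \le f(k,t)$ holds for the monotone $f$ coming from the Arnborg--Lagergren--Seese meta-theorem. Summing over $k$ values of $k'$ still yields $k \cdot f(k,t) \cdot n^{O(1)}$, which is FPT in $k + t$.

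I do not expect any real obstacle here: the nontrivial content is entirely inside Theorem~\ref{thm:saving-treewidth}, and the only thing to justify carefully is the monotonicity claim relating \save and \maxprotect, which is immediate from the fact that protected vertices count as saved.
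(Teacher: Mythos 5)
Your proposal matches the paper's argument: the paper also derives this corollary by running the \maxprotect algorithm of Theorem~\ref{thm:saving-treewidth} for each $k' = 1,\ldots,k$ exactly as in Corollary~\ref{cor:trees:saving}, using the same observation that saving $k$ vertices requires protecting at most $k$ of them. The proof is correct and essentially identical in approach.
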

Observe that this algorithm also works on graphs of bounded local treewidth, because if the graph has a vertex at distance more than $k$ from the root, then any strategy that protects a vertex at distance $i$ from the root in time step $i$ will save at least $k$ vertices, and we can answer \textsc{Yes} immediately.

\begin{corollary}
\save is in FPT on graphs of bounded local treewidth.
\end{corollary}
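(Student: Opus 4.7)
The plan is to reduce \save on bounded-local-treewidth graphs to the FPT algorithm for the bounded-treewidth case just obtained. Recall that a class of graphs has bounded local treewidth if there is a function $f$ such that, for every graph $G$ in the class, every $v \in V(G)$, and every integer $r \ge 0$, the subgraph of $G$ induced by $\{u : \dist_G(v,u) \le r\}$ has treewidth at most $f(r)$.

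First I would split on the eccentricity of $s$. If there is any vertex $v$ with $\dist_G(s,v) > k$, I would argue the instance is immediately a \textsc{Yes}-instance and output the trivial answer. Concretely, fix a shortest $s$-$v$ path $s = v_0, v_1, \ldots, v_d$ with $d > k$ and consider the strategy that protects $v_i$ in time step $i$, for $i = 1, \ldots, k$. Before the protection in step $i$, the fire has only reached vertices at distance at most $i-1$ from $s$, whereas $v_i$ lies at distance exactly $i$ along a shortest path; hence $v_i$ is still unburnt and can legally be chosen. The $k$ distinct vertices $v_1, \ldots, v_k$ are therefore all saved, so the answer is \textsc{Yes}.

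Otherwise every vertex of $G$ lies within distance $k$ of $s$, i.e.\ $V(G)$ coincides with the $k$-neighborhood of $s$. By the bounded-local-treewidth hypothesis, $G$ itself then has treewidth at most $f(k)$. At this point I would invoke the previous corollary, which solves \save in time $h(k, \mathrm{tw}(G)) \cdot n^{O(1)}$ for some computable $h$. Since $\mathrm{tw}(G) \le f(k)$, the overall running time is $h(k, f(k)) \cdot n^{O(1)}$, which is FPT in $k$ alone, as required.

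The only point that needs any care is the correctness of the distant-vertex shortcut, and that reduces to the routine observation that the fire cannot travel faster than graph distance from its source; once this is noted, every step of the argument is mechanical and the corollary follows.
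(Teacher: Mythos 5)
Your proposal is correct and follows essentially the same route as the paper: if some vertex lies at distance more than $k$ from $s$, protecting successive vertices along a shortest path saves $k$ vertices and the answer is \textsc{Yes}; otherwise the whole graph is the $k$-neighborhood of $s$, so it has treewidth bounded by a function of $k$ and the bounded-treewidth algorithm applies. The paper states this observation more tersely, but the argument is identical.
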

The class of graphs having bounded local treewidth coincides with the class of apex-minor-free graphs~\cite{Eppstein2000}, which includes the class of planar graphs.

\begin{corollary}
\save is in FPT on planar graphs.
\end{corollary}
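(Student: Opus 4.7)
The plan is to derive this as an immediate consequence of the previous corollary on graphs of bounded local treewidth, together with the well-known fact that planar graphs form a subclass of graphs of bounded local treewidth. Concretely, I would invoke Eppstein's characterization (cited just before the statement): the class of graphs of bounded local treewidth coincides with the class of apex-minor-free graphs, and since $K_5$ and $K_{3,3}$ are (non-apex) minors forbidden in every planar graph, the planar graphs are apex-minor-free. Hence planar graphs have bounded local treewidth, and the previous corollary applies.

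First I would state explicitly that for any planar graph $G$ and any vertex $v \in V(G)$, the subgraph induced by vertices at distance at most $r$ from $v$ has treewidth bounded by a function $f(r)$ depending only on $r$ (this is essentially Robertson--Seymour / Eppstein for planar graphs, and can be taken as $f(r) = O(r)$). Then I would combine this with the observation already made in the paragraph above: if the input planar graph contains a vertex at distance more than $k$ from the source $s$, we can answer \textsc{Yes} trivially, since the strategy which protects, at time step $i$, the depth-$i$ vertex on a shortest path from $s$ to that far vertex saves at least $k$ vertices.

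Thus we may assume every vertex of $G$ lies within distance $k$ of $s$. In that case the entire graph $G$ has treewidth at most $f(k)$, and we may directly invoke the preceding corollary (\save is in FPT when parameterized by $k$ and treewidth) to solve the instance in time $g(k) \cdot n^{O(1)}$ for some computable function $g$.

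There is essentially no real obstacle: the only minor point to be careful about is that the bounded-local-treewidth argument has to be applied to the ball around $s$ (not around an arbitrary vertex), but this is exactly how the previous corollary was phrased in the paragraph preceding the statement, so the argument is a direct specialization. I would conclude by noting that the same proof gives the result for any apex-minor-free class, including graphs of bounded genus and graphs of bounded maximum degree, as asserted in the introduction.
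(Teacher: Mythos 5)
Your proof is correct and follows essentially the same route as the paper's (implicit) argument: planar graphs have bounded local treewidth, so either some vertex lies at distance more than $k$ from $s$ and the instance is a trivial \textsc{Yes}, or the whole graph has treewidth bounded by a function of $k$ and the treewidth-parameterized algorithm applies. One small slip worth fixing: $K_5$ and $K_{3,3}$ are in fact apex graphs (which is precisely why excluding them as minors makes planar graphs apex-minor-free), so the parenthetical ``(non-apex)'' should be deleted, though this does not affect the argument.
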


\section{Burning Vertices} \label{sec:burning}
In this section, we consider the \textsc{Firefighter} problem when parameterized by the number of burned vertices, which we call the \burn problem.
We improve on results of Cai \etal\cite{cai-verbin-yang} by showing an $O(2^kn)$-time deterministic algorithm for trees and an $O(3^kn)$-time deterministic algorithm for general graphs.
Furthermore, we prove that the \burn problem has no polynomial kernel for trees, resolving an open problem from~\cite{cai-verbin-yang}.


\subsection{Algorithms}
In this subsection, we show an $O(2^kn)$-time algorithm for the \burn problem on trees, and an $O(3^kn)$-time algorithm on general graphs.

\begin{theorem}
The \burn problem for trees can be solved in $O(2^kn)$ time and polynomial space.
\end{theorem}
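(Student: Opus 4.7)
The plan is to exploit Lemma~\ref{lem:trees:base} to restrict attention to strategies in which the firefighter protects exactly one vertex $p_i$ at each depth $i=1,\ldots,\ell$ for some $\ell$, with all ancestors of every $p_i$ burning. Under this restriction, the burning frontier $F_d$ at depth $d$ is well-defined: $F_0=\{s\}$, and while we are still protecting $F_{d+1} = \mathrm{children}(F_d) \setminus \{p_{d+1}\}$; once we stop protecting, every descendant of the current frontier burns. A branching algorithm searching over all such strategies therefore suffices to solve \burn.

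Concretely, root $T$ at $s$ and, in $O(n)$ preprocessing time, compute the subtree size of every vertex. Define a recursive procedure $\psi(F,b)$ that returns \textsc{Yes} iff there is a continuation of the strategy in which at most $b$ further vertices burn. If $F=\emptyset$, return \textsc{Yes}. Otherwise let $C$ be the multiset of children of $F$ in $T$, and try two kinds of move: (a) stop protecting, which succeeds iff the total number of strict descendants of $F$ is at most $b$, a quantity immediately available from the preprocessing; and (b) for every $p \in C$ with $|C|-1 \le b$, recurse into $\psi(C \setminus \{p\},\, b-(|C|-1))$. The top-level call is $\psi(\{s\}, k-1)$, and by Lemma~\ref{lem:trees:base} this restricted search still contains an optimal strategy whenever one exists.

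The running-time analysis turns on the observation that along any root-to-leaf path of the recursion, each protection step debits exactly $|C|-1$ from the remaining budget, so the sum of $|C|-1$ values taken along such a path is at most $k-1$. Applying the elementary inequality $c \le 2^{c-1}$ for $c\ge 1$, the product of the branching factors $|C|$ along any path is bounded by $2^{k-1}$; summing over all prefixes yields a recursion tree with $O(2^k)$ leaves, as one also sees from the worst-case recurrence $L(b) \le 1 + 2L(b-1)$. Each call does $O(k)$ local work given the preprocessing, giving $O(n + k \cdot 2^k) = O(2^k n)$ total time, and polynomial space since the DFS stores only the current frontier (its depth is bounded by the depth of $T$). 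The main obstacle is this branching analysis: without the $c \le 2^{c-1}$ amortization one only obtains a trivial $k^{O(k)}$ bound on the number of branches, so the crux of the proof is pinning down the right accounting of budget against branching factor.
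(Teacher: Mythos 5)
Your proposal is correct and follows essentially the same route as the paper: both branch on which child of the current burning frontier to protect, debit the budget by (number of children $-$ 1), and observe that the recurrence $T(k)\le aT(k-(a-1))+O(n)$ is worst at $a=2$, yielding $O(2^k n)$. Your explicit frontier bookkeeping, the (redundant, since protecting never hurts on a tree) ``stop protecting'' option, and the $c\le 2^{c-1}$ amortization are just a more detailed write-up of the same branching argument.
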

\begin{proof}
If the root $s$ has at most one child, then we immediately answer \textsc{Yes}.
We may assume that the root has exactly $a \ge 2$
children, and $k \ge a-1$ since otherwise we simply answer \textsc{No}.
We use Lemma~\ref{lem:trees:base}
and branch on every child of the root $s$.
In each branch, we cut the subtree rooted at the protected vertex,
identify all the vertices that are on fire after the first round,
and decrease the parameter by $a-1$.
In this way, we obtain a new instance of the \burn problem
with parameter value equal to $k-(a-1)$.
The time bound follows from the inequality 
$$T(k) \le aT(k-(a-1))+O(n)$$
which is worst when $a=2$.
\end{proof}
Before we present the algorithm on general graphs, we need to reformulate the \firefighter problem
to an equivalent version. Consider a different version of the \firefighter problem,
where in each round an arbitrary number of vertices may be protected under the following restrictions:
\begin{itemize}
  \item each protected vertex must have a neighbor which is on fire,
  \item after $i$ rounds of the process at most $i$ vertices are protected.
\end{itemize}
By \burntwo we denote the \burn problem where vertices are protected subject to the above rules.

\begin{lemma}
An instance $(G,s,k)$ of the \burn problem is a \textsc{Yes}-instance if and only if it is a \textsc{Yes}-instance of the \burntwo problem.
\end{lemma}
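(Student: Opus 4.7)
The plan is to prove both implications by exchange arguments that compare the burning process round by round. A basic monotonicity lemma underlies both directions: if one strategy has cumulatively protected a superset of vertices by every time step, then at every time step its set of burning vertices is a subset of the other's. This follows by a one-step induction on the round number, since the fire at round $t$ spreads from $B_{t-1}$ to the open neighborhood of $B_{t-1}$ minus the vertices protected by round $t$.

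For the direction ($\Leftarrow$), given a \burntwo strategy I would list its protected vertices $v_{1},\ldots,v_{m}$ in order of their protection time $\tau_{1}\le\tau_{2}\le\cdots\le\tau_{m}$. The cumulative cap forces $\tau_{j}\ge j$, so setting $q_{j}:=v_{j}$ (and picking arbitrary vertices thereafter) produces a \burn schedule in which each vertex is protected at least as early as in \burntwo. Validity of the schedule---each $q_{j}$ is not on fire when protected---then follows from the monotonicity lemma applied at round $j-1$, and the same lemma shows the burned set of this \burn strategy is contained in that of the \burntwo strategy, hence has size at most $k$.

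For the direction ($\Rightarrow$), given a \burn strategy $q_{1},q_{2},\ldots$ I would define a \burntwo strategy greedily: in round $r$, protect every $q_{i}$ with $i\le r$ that is not yet protected and has a neighbor in the current set of burning vertices of the \burntwo simulation. Condition (a) of \burntwo holds by construction and the cumulative cap holds because the set of protections made by round $r$ is contained in $\{q_{1},\ldots,q_{r}\}$. The main task is to show by induction on $t$ that $B_{t}^{P}\subseteq B_{t}^{Q}$, where $B_{t}^{P}$ and $B_{t}^{Q}$ are the burned sets after round $t$ in \burntwo and in \burn respectively. In the inductive step a new burning vertex $v\in B_{t}^{P}\setminus B_{t-1}^{P}$ with $v\notin\{q_{1},\ldots,q_{t}\}$ is easily placed in $B_{t}^{Q}$ using $B_{t-1}^{P}\subseteq B_{t-1}^{Q}$, so the only interesting case is $v=q_{i}$ for some $i\le t$.

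The main obstacle is precisely this last case, which lies in the subtle interaction between the two dynamics. I would let $r\le t$ be the first round in which $v$ has a burning neighbor in the \burntwo simulation; the crux is to show $r\ge i$, since then the greedy rule would already have protected $v$ by round $r$, contradicting $v\in B_{t}^{P}$. If $r<i$ were possible, the inductive hypothesis would place that burning neighbor in $B_{r-1}^{Q}$ as well, the fire would reach $v$ in the \burn strategy at round $r$, and $v=q_{i}$ would burn before its scheduled protection at round $i>r$---contradicting that $v$ is permanently protected in \burn. The same observation simultaneously verifies that the greedily chosen protections in \burntwo are always applied to vertices that are not yet burning, making the constructed \burntwo strategy valid.
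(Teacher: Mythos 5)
Your proof is correct and follows essentially the same approach as the paper: for one direction you reorder the \burntwo protections by protection time (using that the cumulative cap forces the $j$-th protected vertex to be protected at round at least $j$), and for the other you delay each \burn protection until the fire first reaches its neighborhood; the monotonicity induction you make explicit is exactly what the paper's terser argument leaves implicit.
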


\begin{proof}
Assume that $(G,s,k)$ is a \textsc{Yes}-instance of the \burn problem.
Let $P$ be the set of protected vertices of an optimum strategy $S$. 
We construct a strategy $S'$, which in the $i$-th round of \burntwo protects 
exactly those vertices of $P$ which have a neighbor which is on fire.
Clearly after $i$ rounds at most $i$ vertices will be protected, since 
each vertex of $P$ is protected by the strategy $S'$ not earlier than by the strategy $S$.

In the other direction assume that $(G,s,k)$ is a \textsc{Yes}-instance of the \burntwo problem
and $P$ is the set of protected vertices of an optimum strategy $S'$.
We construct a strategy $S$ as follows.
Let $(v_1,\ldots,v_{|P|})$ be a sequence of vertices of $P$ sorted by the
round in which a vertex is protected by $S'$ (breaking ties arbitrarily).
In the $i$-th round of strategy $S$ we protect the vertex $v_i$.
The vertex $v_i$ is not on fire in the $i$-th round, because in the strategy $S'$ 
it is protected not earlier than in the $i$-th round.
\end{proof}

\begin{theorem}
There is an $O(3^kn)$-time and polynomial-space algorithm for the \burntwo problem on general graphs.
\end{theorem}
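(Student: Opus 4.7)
The plan is a bounded-search-tree algorithm for \burntwo, exploiting the fact that \burntwo allows protections to be scheduled flexibly. The state of each recursive call is the triple $(F, P, k')$, where $F$ is the current burning set, $P$ the set of already-protected vertices, and $k'$ the remaining burn budget; we start from $F = \{s\}$, $P = \emptyset$, $k' = k-1$. At each call I first compute the fire boundary $X := N(F) \setminus (F \cup P)$: if $X = \emptyset$ we return \textsc{Yes} (the fire is contained) and if $k' < 0$ we return \textsc{No}.

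Otherwise I pick a canonical vertex $v \in X$ and branch on the fates of $v$ together with a canonically chosen companion vertex (either another boundary vertex, or a neighbor of $v$ that would freshly appear on the boundary once $v$ burns). The aim is three exhaustive branches each of which forces at least one additional vertex into $F$, yielding the recurrence $T(k') \leq 3T(k'-1) + O(n)$ and hence $O(3^k n)$ overall time. Intuitively this is feasible because, in \burntwo, every protected vertex can be matched to a distinct burning neighbor (protection requires a burning neighbor at the moment of protection), so each ``protect'' action can be amortized against a previously burned vertex; in particular the total number of protected vertices is at most the total number of burned vertices. This amortization means we do not need to track a separate protection budget in the state: the scheduling constraint ``at most $r$ vertices protected after $r$ rounds'' is automatically satisfiable whenever the running totals are consistent, so we may delay the assignment of rounds to protections until the end.

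The hardest part will be designing the three branches so that (i) together they cover every valid \burntwo strategy and (ii) every branch strictly decreases $k'$. I would handle this by a case analysis on the possible fates of the chosen pair of vertices, combined with a dominance argument that prunes sub-branches producing no new information (for instance, ruling out that both chosen vertices are simultaneously protected when one such protection can be deferred without loss). Polynomial space follows because the recursion depth is $O(k)$ and each frame stores only $(F, P, k')$ plus a constant amount of local data, and the $O(n)$ per-call work to compute $X$ and select canonical vertices, combined with the $O(3^k)$ branching bound, delivers the stated running time.
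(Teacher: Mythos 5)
Your proposal has two genuine gaps, and the second one is where the actual content of the theorem lives. First, the claim that the scheduling constraint of \burntwo (``after $i$ rounds at most $i$ vertices are protected'') is automatically satisfiable and need not be tracked in the state is false. A boundary vertex that is not protected in the current round burns in that round, so a protection can never be deferred; the constraint is a genuine per-round budget, not a global counting condition. Concretely, let $s$ have a leaf neighbour $x$ and a neighbour $u$ with three further neighbours $w_1,w_2,w_3$: if nothing is protected in round~1, then $u$ and $x$ burn, and in round~2 all of $w_1,w_2,w_3$ lie on the boundary but at most two of them may be protected --- even though three vertices have already burned, so your global condition ``total protected $\le$ total burned'' would wrongly permit protecting all three. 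The proposed matching of protected vertices to \emph{distinct} burning neighbours also fails here, since $w_1,w_2,w_3$ share $u$ as their only burning neighbour. The paper's algorithm carries exactly this budget, $a = i - |P|$, and uses it twice: to accept greedily when $a \ge r$ (the whole boundary $N(B)\setminus P$ of size $r$ can be protected), and to restrict the branching to subsets of the boundary of size at most $a$.

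Second, the three-branch scheme is never designed, and the obvious obstruction is the one you yourself flag: the branch in which both chosen vertices are protected adds nothing to $F$, so it does not decrease $k'$, and it cannot be pruned by ``deferring'' one of the protections, because deferral means burning. The paper resolves this not with a cleverer pair of branches but with a different measure, $\alpha = (k-|B|) + (i-|P|)$, which charges protections as well as burns: in a round whose boundary has size $r$, protecting $j$ of the $r$ boundary vertices burns the other $r-j$, so $\alpha$ drops by exactly $r-1$ \emph{independently of} $j$, while the number of branches is $\sum_{j\le a}\binom{r}{j} \le 2^{r}-1$ (using $a<r$). The recurrence $T(\alpha) \le (2^{r}-1)\,T(\alpha-r+1)+O(n)$ with $r\ge 2$ is worst at $r=2$, giving $O(3^{k}n)$. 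Without either the protection-budget bookkeeping or a measure of this kind, I do not see how your recurrence $T(k')\le 3T(k'-1)+O(n)$ can be established.
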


\begin{proof}
We present a simple branching algorithm. Assume that we are in the $i$-th time step
and let $B$ be the set of vertices which are currently on fire.
Moreover, let $P$ be the set of already protected vertices (in the first round we have $B=\{s\}$ and $P=\emptyset$).
Let $a=i-|P|$ and $r=|N(B) \setminus P|$.
The algorithm does the following:
\begin{enumerate}
   \item \label{step1} If $|B| > k$, then we immediately answer \textsc{No}.
  \item \label{step2} Observe that in the $i$-th round we are allowed to protect at most $\min(a,r)$ vertices.
If $a \ge r$, then we can greedily protect the whole set $N(B) \setminus P$. Hence in this case we answer \textsc{Yes}.
  \item In the last case, when $a < r$, we branch on all subsets of $N(B) \setminus P$ of size at most $a$.
Observe that the number of branches is equal to $\sum_{j=0}^{a} \binom{r}{j} \le 2^r-1$, since we have $a < r$.
\end{enumerate}
The running time of the algorithm is as follows. 
We introduce a measure $\alpha=(k-|B|)+(i-|P|)$ which we use in our time bound.
At the beginning of the first round of the burning process we have $\alpha=(k-1)+(1-0)=k$.
By $T(\alpha)$ we denote the upper bound on the number of steps that our algorithm requires 
for a graph with measure value $\alpha$.
Observe that for $\alpha \le 0$, we have $T(\alpha) = O(n)$.
Let us assume that the algorithm did not stop in step~\ref{step1} or \ref{step2}, and
it branches into at most $2^r-1$ choices of protected vertices.
Observe that no matter how many vertices the algorithm decides to protect,
the value of the measure decreases by exactly $r-1$.
Consequently, we have the inequality $T(\alpha) \le (2^r-1)T(\alpha-r+1)+O(n)$.
Since the algorithm did not stop in steps~\ref{step1} or \ref{step2}, we infer that $r \ge 2$.
The time bound follows from the fact that the worst case for the inequality occurs when $r=2$.
\end{proof}

\begin{corollary}
There is an $O(3^kn)$-time and polynomial-space algorithm for the \burn problem on general graphs.
\end{corollary}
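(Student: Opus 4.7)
The plan is essentially to chain together the two preceding results, since the corollary follows immediately from the equivalence lemma between \burn and \burntwo together with the $O(3^k n)$-time algorithm just established for \burntwo.

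More concretely, given an input $(G,s,k)$ for \burn, I would simply feed the very same triple $(G,s,k)$ into the branching algorithm proved above for \burntwo. By the equivalence lemma, $(G,s,k)$ is a \textsc{Yes}-instance of \burn if and only if it is a \textsc{Yes}-instance of \burntwo, so the output of the \burntwo algorithm is already the correct output for \burn. No transformation of the instance is required, so both the running time $O(3^k n)$ and the polynomial space bound carry over without any overhead.

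The only minor subtlety worth noting explicitly is that the equivalence lemma is purely existential: it guarantees that a valid \burn strategy exists iff a valid \burntwo strategy exists, which is exactly what the decision problem asks about, so we do not need to actually reconstruct a \burn strategy from the \burntwo one for the decision version. If an actual strategy is desired, the constructive direction of the lemma's proof (sorting the protected set $P$ by the round of protection under $S'$ and using that ordering for $S$) can be applied to the strategy produced by the \burntwo algorithm, again in linear time, so the overall bounds are unaffected.

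There is no real obstacle here; the work has already been done in the theorem and lemma, and the corollary is simply packaging them together.
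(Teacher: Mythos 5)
Your proposal is correct and matches the paper's (implicit) argument exactly: the corollary is stated without proof precisely because it follows immediately by combining the equivalence lemma between \burn and \burntwo with the $O(3^kn)$-time algorithm for \burntwo. Nothing further is needed.
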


\subsection{No Poly-Kernel for Trees}
\newcommand{\compass}{\ensuremath{\textrm{NP} \subseteq \textrm{coNP}/\textrm{poly}}}

The aim of this subsection is to prove the following theorem.

\begin{theorem}
\label{thm:burn-nokernel}
Unless $\compass$, there is no polynomial kernel for the \burn
problem, even if the input graph is a tree of maximum degree four.
\end{theorem}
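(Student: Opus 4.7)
The plan is to prove the statement by an OR-cross-composition from the \firefighter problem on trees of maximum degree three, which is NP-complete by~\cite{FinbowKMR07}. The equivalence relation will group source instances by their vertex count $n$ and burn parameter $k$; after standard padding, the $t$ input instances $(T_1,s_1,k),\dots,(T_t,s_t,k)$ may be assumed to share common values of $n$ and $k$. From them I will construct a single target instance $(T^{*},s^{*},k^{*})$ of \burn on a tree of maximum degree four with $k^{*}$ polynomial in $n+\log t$, and prove that $(T^{*},s^{*},k^{*})$ is a \textsc{Yes}-instance iff some $(T_i,s_i,k)$ is.

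Let $d=\lceil\log t\rceil$. The tree $T^{*}$ will be a complete binary ``selector'' tree of depth $d$ rooted at $s^{*}$, with each of its $2^{d}$ leaves identified with the root $s_i$ of the corresponding $T_i$ (padding with trivial \textsc{No}-instances if $t<2^{d}$). Then $s^{*}$ has degree $2$, internal selector vertices have degree $3$, each identified leaf $s_i$ has degree at most $1+\deg_{T_i}(s_i)\le 4$ (using $\Delta(T_i)\le 3$), and the remaining vertices keep their original degree $\le 3$; hence $\Delta(T^{*})\le 4$. I set $k^{*}:=d+k$, which is $O(n+\log t)$ as required.

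The forward direction is easy: if some $T_i$ admits a strategy burning at most $k$ vertices, then in the composite the firefighter plays the ``clean'' selector strategy of protecting, at each step $j=1,\dots,d$, the sibling in the selector of the selector vertex that would otherwise catch fire; this channels the fire along a length-$d$ path from $s^{*}$ to $s_i$, and from step $d+1$ onwards she executes the optimum $T_i$-strategy, giving total burn $d+k_{\mathrm{opt}}(T_i)\le d+k$. For the converse, let $\sigma$ be a composite strategy with burn at most $d+k$; write $I$ for the set of invaded instances, $b_i$ for the burn inside $T_i$ under $\sigma$, and $B_{\mathrm{sel}}$ for the set of burned selector vertices. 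I will first establish (by induction on $d$) the combinatorial lemma that any connected subtree of a depth-$d$ binary tree containing the root and at least $\ell\ge 1$ of the depth-$d$ leaves has at least $d+\ell$ vertices. Applied to $B_{\mathrm{sel}}$ this gives $|B_{\mathrm{sel}}|\ge d+|I|$, and, since each $s_i$ with $i\in I$ is counted once in $B_{\mathrm{sel}}$ and once in $T_i$, the composite burn equals $|B_{\mathrm{sel}}|+\sum_{i\in I}b_i-|I|\ge d+\sum_{i\in I}b_i$, so $\sum_{i\in I}b_i\le k$.

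The main obstacle is turning this bound into a valid \emph{standalone} strategy on some $T_i$ of burn at most $k$: $\sigma$ may ``pre-protect'' a $T_i$-vertex at composite time $\le d$, before the fire reaches $s_i$, and such pre-protections cannot be replayed as-is in the standalone game. I plan to dispose of this obstacle as follows. If $\sigma$ makes $p$ pre-protections in total (across all $T_j$'s), the standard optimal-containment analysis of the binary selector -- any time slot $j\le d$ not spent protecting a fire-front child at depth $j+1$ doubles the fire front at that depth -- yields $|I|\ge 2^{p}$, while the $p$ pre-protections lie in at most $p$ distinct subtrees $T_j$ (possibly some outside $I$). Since $|I|\ge 2^{p}>p$ for every $p\ge 0$, some invaded instance $T_{i_0}\in I$ contains no pre-protection at all; all of $\sigma$'s $T_{i_0}$-protections then occur at composite times $>d$, and shifting their times down by $d$ produces a valid standalone firefighter strategy on $T_{i_0}$ of burn exactly $b_{i_0}\le\sum_{i\in I}b_i\le k$. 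Hence $(T_{i_0},s_{i_0},k)$ is a \textsc{Yes}-instance, completing the OR-cross-composition; together with the NP-hardness of the source problem this yields the claimed no-polynomial-kernel lower bound under $\compass$.
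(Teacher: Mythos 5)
Your construction is exactly the paper's: a complete binary selector tree of depth $\log t$ whose leaves are identified with the roots $s_i$, target parameter $k^{*}=k+\log t$, maximum degree four, composed from the NP-complete \burn problem on trees of maximum degree three. Where you genuinely diverge is in the correctness argument for the backward direction. The paper disposes of it in one sentence by invoking the structural result of MacGillivray and Wang (Lemma~\ref{lem:trees:base}): an optimum strategy on a tree protects exactly one vertex per depth, each with a burned ancestor, which immediately canonicalizes the composite strategy into ``one selector vertex per level for $\log t$ rounds, then a standalone strategy inside the unique invaded $T_{i_0}$.'' You instead prove correctness from scratch for an \emph{arbitrary} composite strategy: the subtree-counting lemma ($|B_{\mathrm{sel}}|\ge d+|I|$) gives $\sum_{i\in I}b_i\le k$, and the $|I|\ge 2^{p}$ pigeonhole argument handles the one real subtlety --- protections placed inside some $T_j$ before the fire reaches $s_j$, which cannot be replayed in the standalone game. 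Both routes are sound (your fire-front bound $|I|\ge 2^{p}$ does hold: with $p$ of the first $d$ time slots spent off the selector front, the front size at depth $d$ is at least $1+\sum_{j\in\mathrm{Pre}}2^{d-j}\ge 2^{p}$, and since the $p$ pre-protections touch at most $p<2^{p}\le|I|$ subtrees, an invaded subtree free of pre-protections exists). What the paper's route buys is brevity; what yours buys is self-containment and an explicit treatment of the pre-protection issue that the paper's appeal to Lemma~\ref{lem:trees:base} silently absorbs. The remaining bookkeeping (equivalence classes by $(n,k)$ versus by $k$ alone, padding with trivial \textsc{No}-instances versus duplicating instances) is immaterial.
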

Before we prove Theorem~\ref{thm:burn-nokernel} we describe
the necessary tools.
We use the cross-composition technique introduced by Bodlaender et al. \cite{cross-composition}, which is based on the previous results of Bodlaender et al.~\cite{bodlaender-nokernel} and Fortnow and Santhanam~\cite{fortnow-santhanam-nokernel}.
We recall the crucial definitions.
\begin{definition}[Polynomial equivalence relation \cite{cross-composition}]\ 
An equivalence relation $\mathcal{R}$ on $\Sigma^\ast$ is called a {\em{polynomial equivalence
  relation}} if (1) there is an algorithm that given two strings $x,y \in \Sigma^\ast$
  decides whether $\mathcal{R}(x,y)$ in $(|x|+|y|)^{O(1)}$ time; (2) for any finite set $S \subseteq \Sigma^\ast$
  the equivalence relation $\mathcal{R}$ partitions the elements of $S$ into at most $(\max_{x \in S} |x|)^{O(1)}$ classes.
\end{definition}
\begin{definition}[Cross-composition \cite{cross-composition}]
Let $L \subseteq \Sigma^\ast$, and let $Q \subseteq \Sigma^\ast \times \mathbb{N}$ be
a parameterized problem. We say that $L$ {\em{cross-composes}} into $Q$ if there is a polynomial
equivalence relation $\mathcal{R}$ and an algorithm which, given $t$ strings $x_1, x_2, \ldots x_t$
belonging to the same equivalence class of $\mathcal{R}$, computes an instance
$(x^\ast,k^\ast) \in \Sigma^\ast \times \mathbb{N}$ in time polynomial in $\sum_{i=1}^t |x_i|$
such that (1) $(x^\ast,k^\ast) \in Q$ iff $x_i \in L$ for some $1 \leq i \leq t$; (2)
$k^\ast$ is bounded polynomially in $\max_{i=1}^t |x_i| + \log t$.
\end{definition}
\begin{theorem}[\cite{cross-composition}, Theorem 9]\label{thm:cross-composition}
If $L \subseteq \Sigma^\ast$ is NP-hard under Karp reductions
and $L$ cross-composes into the parameterized problem $Q$ that
has a polynomial kernel, then $\compass$.
\end{theorem}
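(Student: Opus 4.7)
The plan is to invoke the cross-composition framework (Theorem~\ref{thm:cross-composition}) with source language $L$ equal to the NP-hard unparameterized \burn problem on trees of maximum degree three, established by Finbow et al. As polynomial equivalence relation I would declare two instances equivalent iff they share $n=|V(T)|$ and target $k$, yielding polynomially many classes on any sample. WLOG $1\le k\le n$, since $k=0$ is trivially NO and $k\ge n$ trivially YES.

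For the construction, given $t$ equivalent instances $(T_1,s_1,k),\ldots,(T_t,s_t,k)$, I build a complete binary selection tree $B$ of depth $d=\lceil\log t\rceil$ rooted at a fresh vertex $s^\star$, identify the $i$-th leaf of $B$ with $s_i$ for each $i\le t$, and fill any unused leaves with copies of $T_1$ so that the OR semantics is preserved. I set the combined parameter $k^\star=k+d$. Internal vertices of $B$ have degree three, $s^\star$ has degree two, and each $s_i$ gains only one edge (to its $B$-parent), so the resulting tree $T^\star$ has maximum degree four, and the bound $k^\star\le\max_i|x_i|+\log t+1$ is as Theorem~\ref{thm:cross-composition} requires.

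For the forward direction, if some $T_i$ admits a strategy $\sigma$ burning at most $k$ vertices, I lift it as follows: in round $j\in\{1,\ldots,d\}$ protect the $B$-sibling of the next vertex on the unique $s^\star\to s_i$ path, funneling the fire along that path so that only $s^\star$ together with exactly one vertex per depth $1,\ldots,d-1$ of $B$ burns, and $s_i$ alone burns at depth $d$; from round $d+1$ on I replay $\sigma$ time-shifted by $d$, which is legal because fire first enters $T_i$ at time $d$. The total burn is $d+k=k^\star$.

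The main obstacle is the reverse direction. I would invoke Lemma~\ref{lem:trees:base} on the tree $T^\star$ to fix an optimum strategy and an integer $\ell$ such that exactly one vertex $p_j$ is protected at each depth $1,\ldots,\ell$ and every ancestor of each $p_j$ is burned. The central claim is $\ell\ge d$: otherwise after round $\ell$ no more protections are made, so from the unique burning depth-$\ell$ vertex $q_\ell$ the fire spreads unimpeded through the remaining $d-\ell$ levels of $B$, reaches $2^{d-\ell}$ leaves, and then burns each of the $2^{d-\ell}$ subtrees hanging off them in full. A short count gives total burn $\ell-1+(n+1)\cdot 2^{d-\ell}$, which (since burn is monotonically decreasing in $\ell$ over $\ell\le d$) is minimized on $\ell<d$ at $\ell=d-1$ with value $2n+d$, strictly exceeding $k+d=k^\star$ because $2n>k$ follows from $1\le k\le n$---contradicting optimality. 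With $\ell\ge d$ established, the first $d$ protections form a path-cutoff in $B$ that leaves exactly one $s_{i^\star}$ burning at depth $d$, and $p_{d+1},\ldots,p_\ell$ all lie inside the subtree $T_{i^\star}$ (the only place reached by fire after time $d$). Shifted by $-d$, they form a valid $T_{i^\star}$-strategy with burn at most $k^\star-d=k$, so $T_{i^\star}\in L$. The OR condition therefore holds, and Theorem~\ref{thm:cross-composition} delivers the claimed no-polynomial-kernel result.
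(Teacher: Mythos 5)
Your proposal does not prove the statement in question. Theorem~\ref{thm:cross-composition} is the general framework theorem: \emph{for any} NP-hard language $L$ and \emph{any} parameterized problem $Q$, if $L$ cross-composes into $Q$ and $Q$ has a polynomial kernel, then $\compass$. What you have written is instead a proof of Lemma~\ref{lem:burn-cross-composition} (plus the surrounding argument for Theorem~\ref{thm:burn-nokernel}): you construct a specific cross-composition of \burn on subcubic trees into \burn on trees of maximum degree four, and then you \emph{invoke} Theorem~\ref{thm:cross-composition} as a black box at the end. That is circular with respect to the stated goal --- you cannot prove the framework theorem by exhibiting one instance of its use and then citing the theorem itself to finish.

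A proof of the actual statement has nothing to do with firefighting or trees. It proceeds roughly as follows: given $t$ instances of $L$, partition them into the polynomially many equivalence classes of $\mathcal{R}$, apply the cross-composition to each class to get instances of $Q$, run the polynomial kernel on each, and map each kernelized instance back to $L$ via the Karp reduction witnessing NP-hardness. Because the output parameter is bounded by $\mathrm{poly}(\max_i |x_i| + \log t)$, the total size of the resulting OR-equivalent collection is suitably small, yielding an OR-distillation (weak distillation) of the NP-hard language $L$; by the theorem of Fortnow and Santhanam~\cite{fortnow-santhanam-nokernel} (as refined by Bodlaender et al.~\cite{bodlaender-nokernel,cross-composition}), such a distillation implies $\compass$. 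None of this machinery appears in your proposal. (Note also that the paper itself does not prove this theorem --- it imports it from~\cite{cross-composition} --- so the intended content here is the distillation argument from that reference, not the Firefighter-specific construction, which the paper handles separately in Lemma~\ref{lem:burn-cross-composition}.)
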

We apply Theorem \ref{thm:cross-composition}, where 
as the language $L$ we use \burn in trees of maximum degree three, which is
NP-complete~\cite{FinbowKMR07}.
%
%
To finish the proof of Theorem~\ref{thm:burn-nokernel}, we present a cross-composition algorithm. 

\begin{lemma}
\label{lem:burn-cross-composition}
The unparameterized version of the \burn problem 
on trees with maximum degree three
cross-composes to \burn on trees with maximum degree four.
\end{lemma}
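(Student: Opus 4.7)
The plan is to invoke Theorem~\ref{thm:cross-composition} with $L$ being the NP-hard \burn problem on trees of maximum degree three. I declare two well-formed instances $\mathcal{R}$-equivalent iff they have the same number of vertices $n$ and the same budget $k$, placing all malformed inputs in one additional garbage class; this partitions any finite input set into at most $N^{2}+1$ classes, where $N=\max_{i}|x_{i}|$. For an equivalence class with $k\geq n-1$ every instance is trivially Yes (any single protection already saves at least one vertex), so I output a fixed trivial Yes instance; for the garbage class I output a trivial No instance. The interesting regime is therefore $k\leq n-2 < n$.

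In that regime, assume the $t$ equivalent inputs $(T_{1},s_{1},k),\ldots,(T_{t},s_{t},k)$ each have $n$ vertices, and pad $t$ up to the next power of two by duplicating one of them (which does not affect the disjunction). Set $d=\log_{2}t$. I build a complete binary selector tree $B$ of depth $d$ rooted at a fresh vertex $s^{*}$ and identify its $t$ leaves one-to-one with the roots $s_{1},\ldots,s_{t}$; call the result $T^{*}$. I take $k^{*}=k+d$. Degree check: $s^{*}$ has degree $2$, every internal node of $B$ has degree $3$, each identified leaf has one edge inside $B$ plus at most $3$ edges inside its $T_{i}$ (degree at most $4$), and all deeper vertices keep their original degree at most $3$. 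Thus $T^{*}$ has maximum degree four, and $k^{*}\leq N+\log t$ is polynomial as required.

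For correctness, first suppose that $(T_{i},s_{i},k)$ is Yes via a strategy $\sigma$. In $T^{*}$ I play the ``select-$T_{i}$'' strategy: at each time step $1\leq j\leq d$ I protect the unique depth-$j$ vertex of $B$ that is not an ancestor of $s_{i}$, cutting off every selector subtree not containing $s_{i}$. Exactly one vertex burns at each depth $0,1,\ldots,d-1$; at time $d$ the fire reaches $s_{i}$, and from time $d+1$ onwards I replay $\sigma$ shifted by $d$ steps, burning at most $k$ further vertices, for a total of at most $k+d=k^{*}$. Conversely, suppose $T^{*}$ admits a strategy burning at most $k^{*}$ vertices, and fix an optimum strategy in the canonical form of Lemma~\ref{lem:trees:base}, with exactly one protected vertex at each depth $1,\ldots,\ell$. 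If $\ell<d$, then starting at depth $\ell+1$ the fire is unopposed inside $B$ and doubles at every level, so $2^{d-\ell}\geq 2$ selector leaves catch fire; since no protection lies at depth $>\ell$, each of the corresponding $T_{i}$'s burns in its entirety, giving total burn at least $d + 2n$, which exceeds $d + k$ because $k<n$, a contradiction. Hence $\ell\geq d$: one selector leaf is protected, exactly one leaf $s_{i}$ catches fire, and the part of the strategy inside $T_{i}$ from depth $d+1$ onwards is a \burn strategy that burns at most $k^{*}-d=k$ vertices of $T_{i}$, so $(T_{i},s_{i},k)$ is Yes.

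The main technical obstacle is the reverse direction, where a priori the firefighter could skip selector levels or commit protection moves inside some $T_{j}$ before the fire reaches $T_{j}$. Lemma~\ref{lem:trees:base} eliminates the latter possibility by forcing a canonical optimum with exactly one protection per depth, each supported by a burned ancestor; the exponential-doubling bound eliminates the former, since the slack $n-k\geq 1$ ensures that losing even a single selector level blows the budget. Everything else---verifying that $\mathcal{R}$ is a polynomial equivalence relation and that the construction runs in time polynomial in $\sum_{i}|x_{i}|$---is routine, so the conditions of Theorem~\ref{thm:cross-composition} are met.
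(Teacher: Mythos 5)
Your proof is correct and follows essentially the same route as the paper: the same full-binary selector tree with the $T_i$ hung at its leaves, the same budget $k^*=k+\log_2 t$, and the same appeal to Lemma~\ref{lem:trees:base} for the reverse direction. The only differences are cosmetic or in the direction of extra rigor --- you refine the equivalence classes by $n$ as well as $k$ and dispose of the trivial regime $k\geq n-1$, and you spell out the doubling argument showing $\ell\geq d$ (that skipping a selector level forces at least $2n>k$ extra burned vertices), a step the paper leaves implicit.
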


\begin{proof}
Observe that any polynomial equivalence relation is defined on all
words over the alphabet $\Sigma$ and for this reason
we should also define how the relation behaves on words that do not represent instances of the problem.
For the equivalence relation $\mathcal{R}$ we take a relation
that puts all malformed instances into one equivalence class and all
well-formed instances are grouped according to the number of vertices we
are allowed to burn. 

If we are given malformed instances, we simply output a trivial \textsc{No}-instance.
Thus in the rest of the proof we assume we are given
a sequence of instances $(T_i,s_i,k)_{i=1}^t$ of
the \firefighter problem, where each $T_i$ is of maximum degree three.
Observe that in all instances we have the same value of the parameter $k$.
W.l.o.g.\ we assume that $t=2^h$ for some integer $h \ge 1$. Otherwise we can duplicate an appropriate number of instances $(T_i,s_i,k)$.

We create a new tree $T'$ as follows. Let $T'$ be a full binary
tree with exactly $t$ leaves rooted at a vertex $s'$.
Now for each $i=1,\ldots,t$, we replace
the $i$-th leaf of the tree by tree $T_i$ rooted at $s_i$.
Finally, we set $k'=k+h = k + \log_2t$.
Observe that since each tree $T_i$ is of maximum degree three, the tree $T'$
is of maximum degree four.
To prove correctness, it is
enough to show that any strategy that minimizes 
the number of burned vertices protects exactly one vertex at each depth $1,\ldots,h$, which follows from Lemma~\ref{lem:trees:base}.
Hence in any strategy that minimizes the number of burned vertices, there will be exactly one vertex $s_i$ which is on fire after $h$ rounds.
\end{proof}
We can obtain a similar result for the decision variant of \maxprotect.

\begin{theorem}
\label{thm:protection-kernel}
Unless $\compass$, there is no polynomial kernel for the \maxprotectdecision
problem, even if the input graph is a tree of maximum degree four.
\end{theorem}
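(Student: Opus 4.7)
The plan is to cross-compose from the unparameterized \maxprotectdecision problem on trees of maximum degree three, which is NP-hard because of the unparameterized hardness of \firefighter on such trees, into \maxprotectdecision on trees of maximum degree four. The construction closely follows Lemma~\ref{lem:burn-cross-composition}, with the savings threshold $K'$ chosen to absorb the bookkeeping savings harvested in the binary-tree scaffold.

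The polynomial equivalence relation $\mathcal{R}$ puts all malformed strings into a single class (mapped to a trivial No-instance) and partitions well-formed instances $(T,s,k,K)$ by the triple $(|V(T)|,k,K)$. Given $t = 2^h$ instances from one class with common $(n,k,K)$ (duplicating instances if needed to reach a power of two), the composition first handles the corner case $n < h$ by solving each $T_i$ in $2^{O(n)} \mathrm{poly}(n)$ time via the algorithm of Theorem~\ref{thm:trees:vertex-protection} applied to all $k'' \leq n$, yielding a polynomial-time reduction to a trivial Yes- or No-instance. When $n \geq h$, it builds $T'$ exactly as in Lemma~\ref{lem:burn-cross-composition}: a full binary tree rooted at $s'$ with $t$ leaves at depth $h$, the $i$-th leaf identified with $s_i$ and replaced by $T_i$; then it sets $k' := k + h$ and $K' := (n+1)(t-1) - h + K$. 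The resulting tree has maximum degree four, and $k' \leq k + \log_2 t$ lies within the parameter bound of Theorem~\ref{thm:cross-composition}.

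For the forward implication, given a Yes-witness for some $(T_i,s_i,k,K)$, I prepend $h$ binary-tree protections that, at each depth $1,\ldots,h$, pick the sibling of the current burning binary-tree vertex so as to steer the fire towards $s_i$. Each such sibling subtree at depth $j$ contains $2^{h-j}(n+1)-1$ vertices, summing to $(n+1)(t-1) - h$ saved vertices; together with the $K$ vertices saved inside $T_i$ this meets $K'$. For the converse, I apply Lemma~\ref{lem:trees:base} to an optimal strategy for $T'$: it protects exactly one vertex at each depth $1,\ldots,\ell$ for some $\ell \leq k'$. In the intended case $\ell \geq h$ the fire funnels to a single $s_i$, the remaining $\ell - h \leq k$ protections lie inside $T_i$ (Lemma~\ref{lem:trees:base} forces their ancestors to be burning), and achieving the $K'$ threshold forces $T_i$ to admit a $K$-saving $k$-protection strategy.

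The main obstacle is ruling out the sub-case $\ell < h$: here at least $2^{h-\ell} \geq 2$ trees $T_j$ burn entirely, and a direct accounting shows the total savings equal $(n+1)(t - 2^{h-\ell}) - \ell$. This is strictly less than $K'$ precisely when $(n+1)(2^{h-\ell} - 1) + K > h - \ell$, which is guaranteed by the enforced condition $n \geq h$ since the left-hand side is then at least $n+1 \geq h+1 > h - \ell$. Theorem~\ref{thm:cross-composition} applied to the cross-composition then yields the desired conclusion.
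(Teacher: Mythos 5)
Your proof is correct and follows essentially the same route as the paper: the identical binary-tree cross-composition with $k'=k+h$, and your threshold $K'=(n+1)(t-1)-h+K$ is exactly the paper's $K+(t-1)n+(t-h-1)$ rewritten. The only divergence is your special-casing of $n<h$, which is harmless but in fact unnecessary, since $(n+1)(2^{h-\ell}-1)+K>h-\ell$ already holds for every $n\ge 1$ and $0\le\ell<h$.
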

\begin{proof}
There are only two differences compared to the proof for \burn.
\begin{itemize}
  \item For the equivalence relation $\mathcal{R}$, we take a relation
that puts all malformed instances into one equivalence class, and all
well-formed instances are grouped according to the number of vertices 
of the tree, the parameter value $k$, and the value $K$.
  \item The value of $k'$ for the tree $T'$ is $k+h$, and the value of $K'$ is equal to $K+(t-1)n+(t-h-1)$,
  where $n$ is the number of vertices in each of the trees $T_i$.
  The additional summands are derived from the fact that any optimal strategy
  will ensure that after $h$ rounds exactly one vertex $s_i$ will be on fire
  and hence we save $t-1$ subtrees rooted at $s_i$,
  each containing $n$ vertices, and $t-h-1$ vertices of the full binary tree.
\end{itemize}
This completes the proof.
\end{proof}


\section{Open Problems} \label{sec:open}
In this paper, we refined and extended several parameterized algorithmic and complexity results about different parameterizations of the \firefighter problem.
We conclude with the following open problems.  
\begin{itemize}
\item We have shown that \save is in FPT on graphs of bounded local treewidth , and thus on planar graphs, by showing that the problem is in FPT parameterized by $k$ and the treewidth of a graph. While 
 \maxprotect  is also in FPT parameterized by $k$ and the treewidth, we do not know if the problem is in  
 FPT on planar graphs, and leave it as an open problem. 
 \item  The \firefighter problem is solvable in subexponential time on trees. 
Is it   solvable  in time $2^{o(n)}$ on $n$-vertex  planar graphs? Even the case of outerplanar graphs is open. 
\item   
Finally, we do not know  if any of the three parameterized versions  of the problem is solvable in parameterized subexponential time $2^{o(k)}n^{O(1)}$ on trees. 
\end{itemize}

\paragraph{Acknowledgement}
We thank Leizhen Cai for pointing us to~\cite{Yang2009} and for sending us the full version of~\cite{cai-verbin-yang}. We also acknowledge the support of Schloss Dagstuhl for Seminar 11071 (GRASTA 2011 - Theory and Applications of Graph Searching Problems).  Research of Fedor Fomin  was supported by the European Research Council (ERC) grant ÒRigorous Theory of PreprocessingÓ, reference 267959. 

\bibliographystyle{splncs03}
\bibliography{index}




\end{document}